\newcommand{\squishlist}{
 \begin{list}{$\bullet$}
  { \setlength{\itemsep}{0pt}
     \setlength{\parsep}{3pt}
     \setlength{\topsep}{3pt}
     \setlength{\partopsep}{0pt}
     \setlength{\leftmargin}{1.5em}
     \setlength{\labelwidth}{1em}
     \setlength{\labelsep}{0.5em} } }
\newcommand{\squishend}{
  \end{list}  }
\newcommand{\I}{\operatorname{I}}
\newcommand{\triplets}{\mathcal{T}}
\newcommand{\opt}{{\operatorname{OPT}}}
\newcommand{\vc}{\operatorname{vc}}
\newcommand{\alg}{\operatorname{ALG}}
\newcommand{\knew}{k_{\operatorname{new}}}
\newcommand{\vcapprox}{\textsf{Approx-VC}}
\newtheorem{problem}{Problem}
\newtheorem{definition}{Definition}
\newtheorem{theorem}{Theorem}
\newtheorem{corollary}{Corollary}
\newtheorem{lemma}{Lemma}
\newtheorem{assumption}{Assumption}
\begin{document}

\title{Crowdsourced correlation clustering with relative distance comparisons}
\author{\IEEEauthorblockN{Antti Ukkonen}
\IEEEauthorblockA{Department of Computer Science, University of Helsinki\\
Helsinki, Finland\\
Email: antti.ukkonen@helsinki.fi}}
\maketitle
\begin{abstract}
Crowdsourced, or human computation based clustering algorithms usually
rely on relative distance comparisons, as these are easier to elicit
from human workers than absolute distance information. A relative
distance comparison is a statement of the form ``item A is closer to
item B than to item C''. However, many existing clustering algorithms
that use relative distances are rather complex. They are often based
on a two-step approach, where the relative distances are first used to
learn either a distance matrix, or an embedding of the items, and then
some standard clustering method is applied in a second step. In this
paper we argue that it should be possible to compute a clustering
directly from relative distance comparisons.

Our ideas are built upon existing work on correlation clustering, a
well-known non-parametric approach to clustering. The technical
contribution of this work is twofold. We first define a novel variant
of correlation clustering that is based on relative distance
comparisons, and hence suitable for human computation. We go on to
show that our new problem is closely related to basic correlation
clustering, and use this property to design an approximation algorithm
for our problem. As a second contribution, we propose a more
practical algorithm, which we empirically compare against existing
methods from literature. Experiments with synthetic data suggest that
our approach can outperform more complex methods. Also, our method
efficiently finds good and intuitive clusterings from real relative
distance comparison data.

\end{abstract}

\section{Introduction}
Clustering is a classical unsupervised learning problem.
The task, in colloquial terms,
is to divide a given set of items
to groups such that
similar items are placed in the same group,
while dissimilar items end up in different groups.
Clustering has numerous practical applications,
ranging from customer segmentation to bioinformatics,
and has attracted a lot of attention from
the research community for decades \cite{XuW05}.

In this paper we
study a novel approach to data clustering
that is suitable for {\em human computation} \cite{law2011human,QuinnB11},
i.e., an algorithmic process where
humans carry out parts of the computation,
often using a crowdsourcing platform such as Amazon Mechanical Turk.
Human computation algorithms
are implemented 
via so called human intelligence tasks (HITs, see also \cite{Ipeirotis10}).
A HIT is defined as a piece of input data together with instructions of what to do with the input.
Human computation algorithms operate by sending a large number of HITs to a crowd that processes the tasks in parallel. Once all tasks are completed, the algorithm collects the results and possibly carries out some post-processing to obtain the final result.

To motivate
human computation approaches to clustering,
consider a scenario
where we are given a collection of some items,
e.g.~photographs or pieces of text,
and ask human labellers to assign
each item to some category.
Despite recent advances in computer vision (e.g.~\cite{RussakovskyDSKS15}),
the need for this type of crowdsourced data analysis remains in scenarios
where human performance still exceeds that of machine learning.
In simple cases the categories (or labels) of interest are known
e.g., they can correspond to images different types of galaxies \cite{lintott2008galaxy},
or to texts having positive / negative sentiment \cite{mohammad2013crowdsourcing}.
But in some other situations,
there may not be any predefined categories,
and the first task is to
determine what kind of structure there is in the data to begin with.
This is a data exploration problem
to which clustering is a standard solution.

To design an efficient human computation algorithm
we should make sure that the required HITs
a) {\em are easy for humans to solve}, and
b) {\em can all be solved in parallel}.
Next we discuss
the technical motivation of our approach
on the basis of these two requirements.

At the core of most clustering algorithms
is the notion of {\em distance}.
Indeed, similarity between two items is usually defined
in terms of a distance function that
yields a small numeric value when the items are similar,
and increases as the items become more dissimilar.
A lot of the actual computation carried out by a clustering algorithm
involves calculating, comparing or otherwise using these distances.
Any human computation algorithm for clustering
must thus deal with distances as well,
and it must do this in a manner
that satisfies both requirements a) and b) above.

The main problem is that
absolute (numeric) distances between e.g.~images
can be difficult for humans to specify in a consistent manner.
Even a very simple distance function 
that can take only two possible values,
``similar'' and ``not-similar'',
can be problematic for human annotators \cite{PeiFTR16}.
{\em Relative distance comparisons},
on the other hand,
are often easier to elicit.
Rather than specifying distances on some absolute (and arbitrary) scale,
they represent the distance function
in terms of statements such as
{\em ``item A is closer to item B than to item C''},
or
{\em ``of items A, B, and C, item C is an outlier''}.
Relative distance comparisons of this kind
have been used previously
e.g.~to compute the mean of a set of items \cite{HeikinheimoU13},
density estimation \cite{UkkonenDH15,KleindessnerL16},
distance/kernel learning \cite{SchultzJ03,GomesWKP11,TamuzLBSK11,AmidGU15,KleindessnerL16a},
and to compute embeddings \cite{MaatenW12,AmidU15,AmidVW16}.
To satisfy requirement a) above,
the clustering algorithm must thus use relative distance comparisons only.
Requirement b) is satisfied as long as
the distance comparisons can be collected in one batch,
i.e., there are no interdependencies between the HITs.



Most existing
human computation algorithms for clustering
that satisfy requirements a) and b)
first use the relative distance comparisons
to either learn a distance/kernel matrix \cite{GomesWKP11,TamuzLBSK11}, or
to compute an embedding of the items to $\mathbb{R}^n$ \cite{MaatenW12},
and as a second step apply some ``regular'' clustering algorithm
that uses distance matrices or embeddings.
Such approaches indeed can work well,
and have the sometimes useful property of
learning features for the items (the embedding).

But we argue that clustering is
a relatively simple combinatorial problem.
If the ultimate task is to only compute a clustering,
and there is no other use for e.g.~an embedding of the items,
it seems more desirable to
{\em compute the clustering directly
from the distance comparisons}.
The aim of this paper is to devise
an efficient method for doing this.
We argue that our approach
is 1) conceptually simpler than competing approaches,
and 2) easier to implement and understand.







Our approach is based on
\textsc{Correlation-Clustering},
a problem
originally proposed by Bansal et al \cite{BansalBC04}\footnote{Note that B{\"o}hm et al \cite{BohmKKZ04} use the term ``correlation clustering'' to describe a {\em different problem} that is not to be confused with the one considered in this paper.}.
It is a parameter-free approach to clustering,
where the distance function and item features have been replaced with
``qualitative'' information about how pairs of items
should relate to each other in a good clustering.
A lot of its theoretical properties are known \cite{DemaineEFI06,AilonCN08},
and unlike many other clustering methods,
it does not require setting the number of clusters in advance,
which is a nice practical advantage.

Our main result is
{\em a novel variant of correlation clustering}
that uses relative distance comparisons only,
and is thus particularly well suited for human computation.
The relative distance comparisons we consider
are expressed in terms of {\em triplets}:
out of a set of three items,
one is designated as an ``outlier'',
meaning that it's distance from the two other items is the largest.

\textbf{We make the following contributions in this paper:}
\begin{enumerate}
\item We define a variant of the \textsc{Correlation-Clustering} problem
that takes a set of relative distance comparisons as input (Section~\ref{sect:rcc}).
\item We analyse the problem, show it to be NP-hard (Section~\ref{sect:nphard}),
discuss its connections to the standard \textsc{Correlation-Clustering} problem (Section~\ref{sect:rcc_vs_cc}),
and propose an $O(\log |U|)$ approximation algorithm (Section~\ref{sect:approxalg}),
where $U$ is the number of items being clustered.
This establishes that our problem is not harder than
\textsc{Correlation-Clustering} on general weighted graphs
from the point of view of approximation.
\item We also present a practical, simple, and very efficient local search algorithm
for solving our problem (Section~\ref{sect:localsearch}),
and carry out a number of experiments where we compare our
approach against two alternative methods from literature (Section~\ref{sect:exp}).
\end{enumerate}

\section{Problem definitions and analysis}
Let $U$ denote a set of items,
and let $d$ denote a distance function
$d\colon U \times U \rightarrow \mathbb{R}_0^+$
between items in $U$.
The {\em triplet} $(a,b,c)$, with $a, b, c \in U$,
captures the following {\em relative distance comparison} based on $d$:
\[
d(a,b) \leq \min\{ d(a,c), d(b,c) \}.
\]
That is,
of the items $a$, $b$, and $c$, item $c$ is an {\em outlier}.
(In our notation,
the outlier is always the third item of the triplet.)
Let $\triplets$ denote a set of such triplets over $U$.
Our task is to cluster the items in $U$ given only $\triplets$.
In informal terms,
we want to partition $U$ to disjoint subsets so that
items in the same subset are similar to each other,
and items in different subsets are different from one another
in terms of the distance function $d$.
Let $[1{:}n]$ denote the integers from $1$ to $n$.
A {\em clustering function} $f\colon U \rightarrow [1{:}k]$
assigns to every item in $U$ a {\em cluster label}, i.e.,
an integer between $1$ and $k$.
Let $\I\{ \cdot \}$ denote the indicator function.

\subsection{Background about correlation clustering}
The problem we define in this paper
is closely related to the \textsc{Correlation-Clustering} problem,
as originally defined in \cite{BansalBC04}.
An instance of \textsc{Correlation-Clustering} is
a graph $G$ where
every edge $(u,v)$ is associated some positive weight $w(u,v)$,
as well as either the label $+$ or the label $-$.
The objective is to find a clustering of the vertices
so that vertices connected by a $+$ edge belong to the same cluster,
while vertices connected by a $-$ edge are assigned to different clusters.
More formally:
\begin{problem} (\textsc{Correlation-Clustering})
\label{prob:cc}
Given the graph $G = (U, \{E^+ \cup E^-\}, w)$,
where $U$ is a set of items,
$E^+$ and $E^-$ denote sets of edges
that are labeled with a $+$ and $-$, respectively,
and the edge weights are given by the function $w\colon E^+ \cup E^- \rightarrow \mathbb{R}^+$.
Find a clustering function $f$ that
minimizes the cost
\[
\begin{split}
c(f, G) = \sum_{(u,v) \in E^+} & w(u,v)\I\{ f(u) \neq f(v) \} \; + \\
& \sum_{(u,v) \in E^-} w(u,v)\I\{f(u) = f(v)\}.
\end{split}
\]
\end{problem}
The variant of \textsc{Correlation-Clustering}
given in Problem~\ref{prob:cc}
is commonly known as the one that
concerns {\em ``minimising disagreements in general weighted graphs''}.
Problem~\ref{prob:cc} is NP-hard \cite{BansalBC04},
as well as APX-hard \cite{DemaineEFI06}.

\subsection{Correlation clustering with relative distances}
\label{sect:rcc}
\begin{figure}
\centering
\includegraphics[width=0.6\columnwidth]{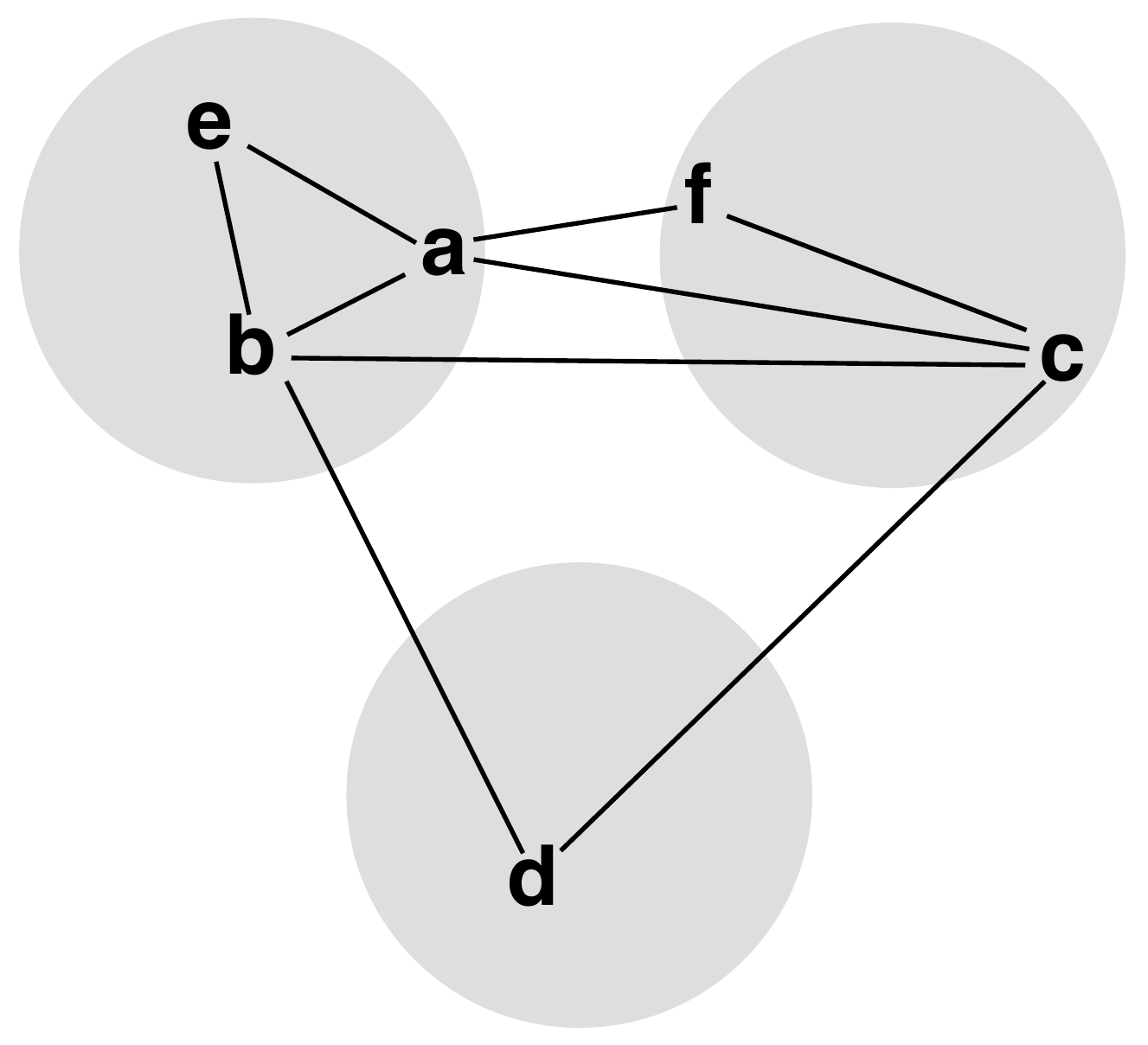}
\caption{A two-dimensional example with
a ground truth clustering of size 3
(the gray circles), items $a$, $b$, $c$, $d$, $e$, and $f$
at the locations shown,
as well as the
triplets $(a,b,c)$, $(b,d,c)$, $(a,b,e)$, and $(a,f,c)$.}
\label{fig:example_1}
\end{figure}
We proceed to
define a novel variant of \textsc{Correlation-Clustering}
that is based on relative distance comparisons
in the form of a set $\triplets$ of triplets as defined above.
Given $\triplets$, the task is to compute a clustering function $f$.
We first discuss
how relative distance comparisons
and an underlying ground truth clustering can interact.

Simply put (and somewhat exaggerated),
a triplet $(a,b,c)$ can be understood as saying that
{\bf ``$a$ and $b$ are next to each other, but $c$ is further away''}.
From the point of view of a clustering task,
we argue that
$(a,b,c)$ thus provides three (uncertain) pieces of information:
{\bf
\squishlist
\item[1.] items $a$ and $b$ might be in the same cluster,
\item[2.] items $a$ and $c$ might be in different clusters, and
\item[3.] items $b$ and $c$ might be in different clusters.
\squishend
}
In practice this will not always be true, of course.
Figure~\ref{fig:example_1} illustrates this
with a toy example
with three clusters, six items (letters $a$ to $f$), and four triplets.
All four triplets reflect the basic Euclidean distance
between the items.
Of the triplets,
$(a,b,c)$ ``correctly''
reflects the cluster structure according to the claim above:
$a$ and $b$ indeed are in the same cluster,
while $c$ is in a different cluster.
The remaining three triplets
show how conflicts arise between
the ground truth clustering and
relative distance comparisons.

First,
distances between all items in the triplet may be ``long'',
as is the case with $b$, $c$, and $d$,
and intuition says that the items should be put in different clusters.
In Figure~\ref{fig:example_1} this is also the case.
But if we interpret the triplet $(b,d,c)$ as above,
we would put $b$ and $d$ in the same cluster.
Second,
all pairwise distances may be ``short'',
as is the case with $a$, $b$ and $e$,
and again intuitively it would make sense to put $a$, $b$ and $e$ in the same cluster.
In Figure~\ref{fig:example_1} these items indeed do belong to the same cluster,
but our interpretation of $(a,b,e)$ suggests that $e$ is in a different cluster.
Third,
it is possible that in an optimal clustering,
items that are closer to each other in fact belong to different clusters,
while the outlier belongs to the same cluster as one of the other items.
This is the case with items $a$, $c$ and $f$.
Item $c$ is an outlier, but nonetheless belongs
to the same cluster as $f$ in
the ground truth solution of Figure~\ref{fig:example_1},
while the above interpretation of $(a,f,c)$ would
put $a$ and $f$ in the same cluster, and $c$ in a different cluster.

We leave a more fine-grained analysis of
the effects of these observations for future work,
and in this paper simply assume that
the three pieces of information provided by the triplet $(a,b,c)$
({\bf points 1--3} listed above)
can be used to identify useful clusterings.
However,
triplets inherently seem to make the clustering task more challenging
than the $E^+$ and $E^-$ constraints of Problem~\ref{prob:cc}.
Because, in the absence of noise,
$E^+$ and $E^-$ will always yield a clustering function $f$
such that $c(f, G) = 0$,
i.e.~the hardness of Problem~\ref{prob:cc} is caused by
noisy constraints.
(Indeed, without noise we can simply remove all edges in $E^-$
and are left with cliques of $E^+$ edges that correspond to the clusters.)
With triplets
no such easy solutions exist even in a noise-free scenario,
as the triplets do not
explicitly specify what items belong to the same cluster.

Moving on,
we say that the triplet $(a,b,c)$ is {\em satisfied}
by the clustering function $f$,
if and only if
$\I\{f(a) = f(b) \neq f(c)\}$
is true.
Otherwise $(a,b,c)$ is {\em unsatisfied}.
In Figure~\ref{fig:example_1}
all triplets except $(a,b,c)$ are unsatisfied by
the clustering shown.
We consider the following problem:
\begin{problem}
\label{prob:rcc}
Given a set of triplets $\triplets$ over items in the set $U$,
find a clustering function $f^\opt$ that minimizes the cost
\[
\begin{split}
s(f, \triplets) = \sum_{(a,b,c) \in \triplets} \I\{ \; & f(a) \neq f(b) \vee \\
& f(a) = f(c) \; \vee \; f(b) = f(c) \; \},
\end{split}
\]
i.e., $f^\opt = \arg\min_f s(f, \triplets)$.
\end{problem}
We have defined the objective function in Problem~\ref{prob:rcc}
to minimize the number of unsatisfied triplets.
Observe that the number of clusters is not specified
as part of the problem input.
As with traditional \textsc{Correlation-Clustering},
$f^\opt$ may use any number of clusters that minimizes $s(f^\opt,\triplets)$.

\subsection{Problem complexity}
\label{sect:nphard}
\begin{theorem}
Problem~\ref{prob:rcc} is NP-hard.
\end{theorem}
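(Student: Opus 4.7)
The plan is to exhibit a polynomial-time reduction from \textsc{Correlation-Clustering} (Problem~\ref{prob:cc}), which is already known to be NP-hard. Given an instance $G=(V,E^+\cup E^-,w)$, I would construct a triplet set $\triplets$ over an extended universe $U = V \cup Z$, where $Z$ is a set of auxiliary ``anchor'' and ``clone'' items, such that minimizing $s(f,\triplets)$ is equivalent to minimizing $c(f,G)$ up to a fixed additive constant. The observation driving the design is that a single triplet $(a,b,c)$ bundles a soft must-link on $(a,b)$ with a soft cannot-link separating $c$ from both; the reduction therefore needs two gadgets that isolate these two effects.

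For the $+$ edges I would use a \emph{positive-edge gadget}: for every $(u,v)\in E^+$, attach $w(u,v)$ copies of the triplet $(u,v,z)$ for a designated anchor $z\in Z$. Provided $z$ is forced into a singleton cluster, each such triplet is unsatisfied exactly when $f(u)\neq f(v)$, reproducing the $E^+$ contribution to $c(f,G)$. For the $-$ edges I would use a \emph{negative-edge gadget}: for every $(u,v)\in E^-$ introduce a clone $u'\in Z$, add the triplet $(u,u',v)$ with multiplicity $w(u,v)$, and pile on enough triplets forcing $f(u')=f(u)$ to make any violation prohibitive. Conditional on $f(u)=f(u')$, the triplet $(u,u',v)$ is unsatisfied precisely when $f(u)=f(v)$, recovering the $E^-$ contribution.

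The remaining work is bookkeeping: (i)~add padding triplets that, with sufficiently large multiplicity, force every anchor into its own cluster and every clone into the cluster of its original, and (ii)~verify that in any optimum these padding constraints are satisfied, so that $s(f^\opt,\triplets)$ differs from $c(f^\opt, G)$ by a known constant. The main obstacle is precisely step~(i): because each triplet intertwines a must-link with a cannot-link, the anchor-forcing and clone-forcing triplets must be crafted so as not to leak spurious constraints onto the $V$-items, while also keeping the total multiplicity polynomial in the input size. Assuming polynomially bounded integer weights in the source instance (which still yields an NP-hard class), the construction is polynomial, and an optimal solution to the constructed instance of Problem~\ref{prob:rcc} yields an optimal correlation clustering for $G$, completing the reduction.
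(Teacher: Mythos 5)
Your reduction starts from the same place as the paper's --- encode each $+$ edge $(u,v)$ as a triplet with $u,v$ as the close pair and a dummy as the outlier, and each $-$ edge $(u,v)$ as a triplet in which $v$ is the outlier and the partner of $u$ is a dummy --- but you then build in a complication that creates a gap you do not close. By routing all positive-edge gadgets through a single shared anchor $z$, and by insisting that clones be \emph{forced} to co-cluster with their originals via high-multiplicity padding triplets, you create exactly the gadget-engineering problem you flag as ``the main obstacle'': every triplet bundles a must-link with two cannot-links, so any padding triplet that pins down a dummy necessarily injects extra soft constraints on whatever other items it mentions, and you never exhibit a construction that avoids this leakage. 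As written, the argument is therefore incomplete precisely at the step you yourself identify as critical.

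The gap disappears if you make each dummy \emph{fresh per edge}, which is what the paper does: for $(u,v)\in E^+$ insert $(u,v,x^{uv})$, and for $(u,v)\in E^-$ insert $(u,u_v^\prime,v)$, where $x^{uv}$ and $u_v^\prime$ each occur in exactly one triplet. Then no forcing is needed at all: a dummy's cluster label is a free variable appearing in a single term of $s(f,\triplets)$, so for any clustering of the original vertices you may place $x^{uv}$ in a fresh singleton (making $(u,v,x^{uv})$ unsatisfied iff $f(u)\neq f(v)$) and place $u_v^\prime$ with $u$ (making $(u,u_v^\prime,v)$ unsatisfied iff $f(u)=f(v)$); conversely, any clustering of the extended universe restricts to a clustering of the original vertices whose correlation-clustering cost is no larger. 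The two optima then agree exactly, with no additive constant and no multiplicities. Reducing from the \emph{unweighted} variant of Problem~\ref{prob:cc}, which is already NP-hard, also lets you drop the polynomially-bounded-weights caveat entirely.
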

\begin{proof}
The proof is a simple reduction from the unweighted variant of Problem~\ref{prob:cc}.
(This problem is also NP-hard.)
Given an instance of \textsc{Correlation-Clustering}
determined by the edge sets $E^+$ and $E^-$,
we create an instance of Problem~\ref{prob:rcc}, i.e.,
a set of triplets $\triplets$ as follows:
for every $(u,v) \in E^+$, insert the triplet $(u,v,x^{uv})$,
and for every $(u,v) \in E^-$ insert the triplet $(u, u_v^\prime, v)$.
Here $x^{uv}$ and $u^\prime$ are dummy items that each occur
in a single triplet only, and can hence be satisfied trivially.
The only real constraints to $f^\opt$ are thus
determined by items that appear in $E+$ and $E-$.
Observe that an optimal solution $f^\opt$ to Problem~\ref{prob:rcc}
immediately gives an optimal solution to
the \textsc{Correlation-Clustering} instance.
Moreover, the costs of the solutions are identical.
\end{proof}
The reduction employed in the proof above
has the implication that
Problem~\ref{prob:rcc} is at least as hard as Problem~\ref{prob:cc}
(with unit weights)
also from the point of view of approximation.
Indeed,
any approximation bound for Problem~\ref{prob:rcc}
also holds for the unweighted variant of Problem~\ref{prob:cc}.

\subsection{Mapping Problem~\ref{prob:rcc} to Problem~\ref{prob:cc}}
\label{sect:rcc_vs_cc}
We continue by discussing further properties of Problem~\ref{prob:rcc}.
These will be useful
when we design algorithms for the problem.
In short,
we show how to ``clean up'' the set of triplets $\triplets$
so that it is possible to construct
a reguler \textsc{Correlation-Clustering} instance
from the triplets.

First,
observe that if $\triplets$ contains both
triplets $t = (u,v,x_1)$ and $t^\prime = (u,x_2,v)$ for any $u, v, x_1, x_2 \in U$,
only one of these can be satisfied by {\em any} clustering function $f$
(including $f^\opt$),
because to satisfy $t$ we must have $f(u) = f(v)$,
while satisfying $t^\prime$ requires $f(u) \neq f(v)$.
Some of the triplets in $\triplets$ can (and in most cases will) thus be inconsistent
with each other.
This is a natural consequence of 
how relative distance information is encoded in the triplets:
items $u$ and $v$ may be ``close'' to each other in relation to item $x_1$,
but ``far apart'' when compared with item $x_2$.
Indeed, inconsistencies occur also in sets of triplets $\triplets$
that are fully noiseless,
as also discussed above.
These inconsistencies are naturally represented
in terms of a {\em constraint graph}.
\begin{definition}
\label{def:cgraph}
Two triplets are inconsistent when for some items $u$ and $v$
and any clustering function $f$,
one of the triplets is satisfied when $f(u) = f(v)$,
while the other is satisfied when $f(u) \neq f(v)$.
Let $C_\triplets = (\triplets, E)$ denote the constraint graph
associated with the set $\triplets$.
The vertices of $C_\triplets$ are the triplets,
and the edge set $E$ contains those pairs of triplets
that are inconsistent with each other.
\end{definition}

A {\em vertex cover} of a graph is
a subset of its vertices
such that for every edge,
at least one endpoint belongs to the vertex cover.
Consider a vertex cover of $C_\triplets$, denoted $\vc(C_\triplets)$.
(Notice that {\em any} vertex cover will do, it does not have to be one of minimum size.)
Let $\triplets^\prime = \triplets \setminus \vc(C_\triplets)$,
i.e., $\triplets^\prime$ contains those triplets that are not part of
the vertex cover $\vc(C_\triplets)$.
We can show that
in $\triplets^\prime$ there are no
triplets that would provide conflicting information about any two
items $u$ and $v$:
\begin{lemma}
\label{lemma:inconsistency}
There are no inconsistent triplets in $\triplets^\prime = \triplets \setminus \vc(C_\triplets)$.
\end{lemma}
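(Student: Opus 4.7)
The plan is to give a direct proof by contradiction that leverages nothing more than the definitions of the constraint graph $C_\triplets$ and of a vertex cover. The key observation is that inconsistency between two triplets is exactly the edge relation defining $C_\triplets$ (Definition~\ref{def:cgraph}), so removing a vertex cover from $\triplets$ must destroy every pair of inconsistent triplets.

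Concretely, I would proceed as follows. Suppose for contradiction that $\triplets^\prime$ contains two inconsistent triplets $t_1$ and $t_2$. By Definition~\ref{def:cgraph}, inconsistency means that $(t_1, t_2) \in E$, i.e., there is an edge between $t_1$ and $t_2$ in the constraint graph $C_\triplets$. Since $\vc(C_\triplets)$ is a vertex cover of $C_\triplets$, by the defining property of vertex covers at least one endpoint of this edge must lie in $\vc(C_\triplets)$, so $t_1 \in \vc(C_\triplets)$ or $t_2 \in \vc(C_\triplets)$. But $\triplets^\prime = \triplets \setminus \vc(C_\triplets)$, so neither $t_1$ nor $t_2$ belongs to $\vc(C_\triplets)$, a contradiction.

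There is really no main obstacle: the lemma is essentially a restatement of the vertex cover property once inconsistency is identified with an edge of $C_\triplets$. The only thing to be a bit careful about is phrasing the inconsistency/edge correspondence cleanly, since Definition~\ref{def:cgraph} says two triplets are inconsistent when they force opposite constraints on some pair $(u,v)$; once this is recognized as symmetric and matching the edge set $E$ of $C_\triplets$, the argument above is immediate.
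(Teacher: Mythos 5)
Your proof is correct, and it takes a genuinely more direct route than the paper's. You reduce the lemma to a tautology: by Definition~\ref{def:cgraph} the edge set $E$ of $C_\triplets$ is exactly the set of inconsistent pairs, so two inconsistent triplets both surviving in $\triplets^\prime = \triplets \setminus \vc(C_\triplets)$ would constitute an edge with neither endpoint in the cover, contradicting the definition of a vertex cover. The paper instead argues structurally: for each fixed pair of items $a,b$, the triplets constraining that pair split into two camps (those satisfied only when $f(a) = f(b)$ and those satisfied only when $f(a) \neq f(b)$), the inconsistency edges between the camps form a complete bipartite graph, and any vertex cover of a bipartite clique must contain one entire side; hence for every pair only one camp survives. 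Both arguments are valid proofs of the stated lemma. What the paper's longer detour buys is an explicit per-pair ``agreement'' picture that is then invoked in Definition~\ref{def:residualcc} to place each pair $\{u,v\}$ unambiguously into $E^+$ or $E^-$; your argument yields that same consequence immediately (if two surviving triplets disagreed about a pair they would be inconsistent), so nothing is lost. As a proof of the lemma itself, yours is the cleaner one, and there is no gap.
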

\begin{proof}
See Appendix~\ref{app:inconsistency}.
\end{proof}
In practice this implies that
all triplets in $\triplets^\prime$ are guaranteed to
consistently suggest that items $u$ and $v$ either belong to the same cluster,
or that items $u$ and $v$ belong to different clusters.
Importantly, there are {\em no} two triplets in $\triplets^\prime$
such that one would prefer $u$ and $v$ in the same cluster,
while the other would prefer $u$ and $v$ in different clusters.
We can thus map $\triplets^\prime$ to an instance of
\textsc{Correlation-Clustering} (Problem~\ref{prob:cc}).
\begin{definition}
\label{def:residualcc}
Given the set of triplets $\triplets^\prime = \triplets \setminus \vc(C_\triplets)$,
we define the instance
$G_{\triplets^\prime} = (U, \{E^+ \cup E^-\}, w)$
of \textsc{Correlation-Clustering}
as follows:
\squishlist
\item The set of items $U$ is the union of all items in the triplets in $\triplets^\prime$.
\item For every pair $u, v \in U$,
let $\triplets^\prime_{u,v}$ denote those triplets in $\triplets^\prime$
that contain both items $u$ and $v$.
\item If for given $u$ and $v$, all triplets in $\triplets^\prime_{u,v}$ agree that
neither $u$ or $v$ is the outlier,
we include $\{u,v\}$ into $E^+$.
\item If for given $u$ and $v$, all triplets in $\triplets^\prime_{u,v}$ agree that
either $u$ or $v$ must be the outlier,
we include $\{u,v\}$ into $E^-$.
\item The weight $w(u,v)$ is the size of $\triplets^\prime_{u,v}$.
\squishend
\end{definition}
Because there are no inconsistent triplets in $\triplets^\prime$,
every pair $\{u,v\}$ will be assigned to either $E^+$ or $E^-$.

\section{An approximation algorithm}
\label{sect:approxalg}
Above we showed that
by finding a vertex cover of the constraint graph $C_\triplets$,
we can turn a given instance of Problem~\ref{prob:rcc} into
a regular \textsc{Correlation-Clustering} instance.
We make use of this to design
an approximation algorithm for Problem~\ref{prob:rcc}.

First, we assume that an approximation algorithm exists for \textsc{Correlation-Clustering}
in general weighted graphs.
\begin{assumption}
\label{asm:ccapp}
Let $f^\opt_{CC}$ denote the optimal solution to
the \textsc{Correlation-Clustering} instance $G_{\triplets^\prime}$.
We assume there exists a polynomial time algorithm
for \textsc{Correlation-Clustering} that finds the solution $f^{\alg}$
that satisfies
\[
c( f^{\alg}, G_{\triplets^\prime} ) \leq \alpha \; c( f^\opt_{CC}, G_{\triplets^\prime} ),
\]
where $\alpha$ is some function of the input $\triplets$.
\end{assumption}
Second, we assume that a constant factor approximation algorithm exists for
finding minimum vertex covers:
\begin{assumption}
\label{asm:vcapp}
Let $C_\triplets$ denote the constraint graph of Definition~\ref{def:cgraph},
and let $\vc_{\min}(C_\triplets)$ denote its vertex cover of {\em minimum} size.
We assume there exists a polynomial time algorithm
that finds the solution $\vc^{\alg}(C_\triplets)$
that satisfies
\[
|\vc^{\alg}(C_\triplets)| \leq \beta \; |\vc_{\min}(C_\triplets)|,
\]
where $\beta > 1$ is some constant.
\end{assumption}
Our approximation algorithm is described in Algorithm~\ref{alg:1}.
In short, we first solve minimum vertex cover on $C_\triplets$,
discard triplets that belong to the found cover,
and then solve the remaining \textsc{Correlation-Clustering} instance.
The algorithm runs in polynomial time
as long as the algorithms of assumptions \ref{asm:ccapp} and \ref{asm:vcapp}
run in polynomial time.
The intermediary steps are trivially polynomial in the size of $\triplets$.
\begin{algorithm}[t]
\caption{An approximation algorithm}
\label{alg:1}
\begin{algorithmic}[1]
\STATE Input: triplets $\triplets$.
\STATE Construct the constraint graph $C_\triplets$ according to Definition~\ref{def:cgraph}.
\STATE Find an approximate minimum vertex cover $\vc^{\alg}(C_\triplets)$ by using the algorithm of Assumption~\ref{asm:vcapp}.
\STATE Let $\triplets^\prime \leftarrow \triplets \setminus \vc^{\alg}(C_\triplets)$.
\STATE Construct $G_{\triplets^\prime}$ according to Definition~\ref{def:residualcc}.
\STATE Find an approximate solution $f^{\alg}$ to $G_{\triplets^\prime}$ by using the algorithm of Assumption~\ref{asm:ccapp}.
\STATE Return $f^{\alg}$.
\end{algorithmic}
\end{algorithm}

We give the following theorem:
\begin{theorem}
\label{thm:approx}
Let $f^\opt$ denote the optimal solution to Problem~\ref{prob:rcc},
and denote by $f^{\alg}$
the solution found by Algorithm~\ref{alg:1}.
We have
\[
s( f^{\alg}, \triplets ) \leq (2\alpha + \beta) \; s( f^\opt, \triplets ),
\]
where $\alpha$ and $\beta$ are
the approximation factors in assumptions
\ref{asm:ccapp} and \ref{asm:vcapp}, respectively.
\end{theorem}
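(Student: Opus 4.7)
The plan is to split $s(f^{\alg}, \triplets)$ according to whether triplets lie in $\vc^{\alg}(C_\triplets)$ or in $\triplets' = \triplets \setminus \vc^{\alg}(C_\triplets)$, and bound each piece separately in terms of $s(f^\opt, \triplets)$. For the cover piece, the contribution to $s$ is at most $|\vc^{\alg}(C_\triplets)|$, so it suffices to compare the cover size to the optimum. For the $\triplets'$ piece, we use that $\triplets'$ is conflict-free (Lemma~\ref{lemma:inconsistency}) to relate $s(\cdot,\triplets')$ to the \textsc{Correlation-Clustering} cost on $G_{\triplets'}$, and then apply Assumption~\ref{asm:ccapp}.

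The first key observation is that for \emph{any} clustering function $f$, the set of triplets it leaves unsatisfied is a vertex cover of $C_\triplets$. Indeed, by Definition~\ref{def:cgraph} every edge of $C_\triplets$ joins two triplets that cannot both be satisfied by any single $f$, so at least one endpoint of every edge is unsatisfied. Applying this to $f^\opt$ gives $|\vc_{\min}(C_\triplets)| \leq s(f^\opt,\triplets)$, and Assumption~\ref{asm:vcapp} then yields $|\vc^{\alg}(C_\triplets)| \leq \beta\, s(f^\opt,\triplets)$. This bounds the contribution of the cover triplets.

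The second key observation connects triplet satisfaction on $\triplets'$ with CC cost on $G_{\triplets'}$. By Definition~\ref{def:residualcc}, every triplet $(a,b,c) \in \triplets'$ contributes exactly one unit of weight to each of the three edges $\{a,b\} \in E^+$, $\{a,c\} \in E^-$, $\{b,c\} \in E^-$, so $c(f,G_{\triplets'})$ equals the total number of these edge-violations summed over $t \in \triplets'$. A short case analysis on the possible assignments of $f(a),f(b),f(c)$ shows that a satisfied triplet contributes $0$ violations and an unsatisfied triplet contributes either $1$ or $2$. Consequently
\[
s(f,\triplets') \;\leq\; c(f,G_{\triplets'}) \;\leq\; 2\, s(f,\triplets')
\]
for every $f$. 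The crucial point here is the upper factor $2$ (rather than the naive $3$); this is what ultimately yields $(2\alpha+\beta)$ rather than $(3\alpha+\beta)$, and it is the step I expect to be the main subtlety of the write-up.

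To finish, chain these bounds: $s(f^{\alg},\triplets) \leq |\vc^{\alg}(C_\triplets)| + s(f^{\alg},\triplets')$, then $s(f^{\alg},\triplets') \leq c(f^{\alg},G_{\triplets'}) \leq \alpha\, c(f^\opt_{CC},G_{\triplets'}) \leq \alpha\, c(f^\opt,G_{\triplets'}) \leq 2\alpha\, s(f^\opt,\triplets') \leq 2\alpha\, s(f^\opt,\triplets)$, where the third inequality uses optimality of $f^\opt_{CC}$ on $G_{\triplets'}$ and the fourth uses the observation above applied to $f^\opt$. Combining with the cover bound gives the claimed $(2\alpha+\beta)$ factor.
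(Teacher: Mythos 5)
Your proposal is correct and follows essentially the same route as the paper: the same decomposition of $s(f^{\alg},\triplets)$ into the cover part and the residual part (Lemma~\ref{lemma:decomp}), the same lower bound $|\vc_{\min}(C_\triplets)| \leq s(f^\opt,\triplets)$ via the observation that the unsatisfied triplets of any $f$ form a vertex cover (Lemma~\ref{lemma:vcbound}), and the same two-sided bound $s(f,\triplets') \leq c(f,G_{\triplets'}) \leq 2\,s(f,\triplets')$ with the factor $2$ coming from the fact that at most two of the three conditions can hold simultaneously (Lemma~\ref{lemma:sw}). The final chain of inequalities matches the paper's step for step.
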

\begin{proof}
See Appendix~\ref{app:approxproof}.
\end{proof}
The actual bound thus depends on both $\alpha$ and $\beta$.
Currently best known approximation algorithms
for solving \textsc{Correlation-Clustering} in general weighted graphs
and finding minimum vertex covers
have bounds of $\alpha = O(\log n)$ \cite{DemaineEFI06} and
$\beta = 2 - \frac{\log\log n}{2\log n}$ \cite{MonienS85}, respectively.
The graph associated with the \textsc{Correlation-Clustering} instance $G_{\triplets^\prime}$
has $|U|$ vertices, and hence we have $\alpha = O(\log |U|)$ and obtain:
\begin{corollary}
Algorithm~\ref{alg:1} is an $O(\log |U|)$ approximation algorithm
for Problem~\ref{prob:rcc}.
\end{corollary}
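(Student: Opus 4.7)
The plan is to derive the corollary as a direct consequence of Theorem~\ref{thm:approx} by substituting concrete approximation factors from the literature for the two subroutines invoked by Algorithm~\ref{alg:1}. Theorem~\ref{thm:approx} gives the bound $s(f^{\alg}, \triplets) \leq (2\alpha + \beta)\, s(f^{\opt}, \triplets)$, so it suffices to quantify $\alpha$ and $\beta$ on the specific inputs that Algorithm~\ref{alg:1} produces, and to check that the sum $2\alpha + \beta$ is $O(\log |U|)$.

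First I would handle $\alpha$. Assumption~\ref{asm:ccapp} refers to the \textsc{Correlation-Clustering} instance $G_{\triplets^\prime}$ constructed in Definition~\ref{def:residualcc}. Its vertex set is a subset of $U$, and hence has at most $|U|$ vertices. Applying the Demaine--Emanuel--Fiat--Immorlica algorithm \cite{DemaineEFI06} to $G_{\triplets^\prime}$ yields a clustering with approximation factor $\alpha = O(\log |U|)$ relative to $c(f^{\opt}_{CC}, G_{\triplets^\prime})$. Crucially, the bound is in terms of the number of vertices of the correlation clustering instance, which is $|U|$, rather than in terms of $|\triplets|$ or the number of edges.

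Next I would deal with $\beta$. Assumption~\ref{asm:vcapp} only requires a constant factor approximation for minimum vertex cover. A standard maximal-matching argument already gives $\beta = 2$, and the refinement of Monien and Speckenmeyer \cite{MonienS85} shaves this to $\beta = 2 - \frac{\log\log |\triplets|}{2\log |\triplets|}$, which is strictly less than $2$. In any case $\beta = O(1)$, so it contributes only a constant additive term.

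Finally I would combine: plugging $\alpha = O(\log |U|)$ and $\beta = O(1)$ into Theorem~\ref{thm:approx} gives $s(f^{\alg}, \triplets) \leq (2 \cdot O(\log |U|) + O(1))\, s(f^{\opt}, \triplets) = O(\log |U|)\, s(f^{\opt}, \triplets)$. Polynomial running time is immediate: both cited subroutines run in polynomial time in the size of their inputs, and Steps 2, 4, and 5 of Algorithm~\ref{alg:1} are trivially polynomial in $|\triplets|$ and $|U|$. I do not anticipate any real obstacle here; the only point that warrants a moment of care is making sure the approximation factor for \textsc{Correlation-Clustering} is expressed in terms of $|U|$ (the number of vertices of $G_{\triplets^\prime}$) rather than $|\triplets|$, so that the final bound is $O(\log |U|)$ as claimed.
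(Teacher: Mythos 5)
Your proposal is correct and follows essentially the same route as the paper: instantiate Theorem~\ref{thm:approx} with $\alpha = O(\log |U|)$ from \cite{DemaineEFI06} (valid because $G_{\triplets^\prime}$ has at most $|U|$ vertices) and a constant $\beta$ from the vertex-cover approximation of \cite{MonienS85}, giving $2\alpha + \beta = O(\log |U|)$. Your explicit remark that the correlation-clustering bound must be measured in the number of vertices of $G_{\triplets^\prime}$ rather than in $|\triplets|$ is exactly the observation the paper makes before stating the corollary.
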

From the point of view of approximation,
Problem~\ref{prob:rcc} is thus not asymptotically harder than Problem~\ref{prob:cc}
(after omitting constants).
Indeed, Algorithm~\ref{alg:1}
is mainly of theoretical interest,
as it shows that Problem~\ref{prob:rcc},
like regular \textsc{Correlation-Clustering},
admits an approximation bound.

\section{A local search algorithm}
\label{sect:localsearch}
Next, we describe a more practical method,
at the core of which is a local search heuristic.
It is,
however,
to a certain extent inspired by Algorithm~\ref{alg:1},
as will become apparent below.

\subsection{Outline of the algorithm}
In short,
we minimize the cost function $s(f, \triplets)$
using a {\bf simple greedy local search algorithm}.
The algorithm updates the cluster assignment of
a single item $u \in U$ at a time,
while keeping the cluster assignment of all other items fixed.
The algorithm makes passes over all items
until it reaches a fixed point
where the value of $f(u)$ no longer changes for any $u \in U$.
Details are shown in Algorithm~\ref{alg:localsearch}.

\begin{algorithm}[t]
\caption{A local search heuristic}
\label{alg:localsearch}
\begin{algorithmic}[1]
\STATE \textbf{Input:} a set of triplets $\triplets$
\STATE $f \leftarrow $ initialise (can be done in different ways)
\STATE $\knew \leftarrow \max_{u \in U} f(u) + 1$
\STATE $f^\prime \leftarrow \emptyset$
\WHILE{$f^\prime \neq f$}
  \STATE $f^\prime \leftarrow f$
  \FOR{$u \in U$}
    \STATE $f(u) \leftarrow \arg\min_{f(u) \in [1{:}\knew]} s(f,\triplets)$
    \IF{$f(u) = \knew$}
      \STATE $\knew \leftarrow \knew + 1$
    \ENDIF
  \ENDFOR
  \STATE ``clean up'' $f$ so that it maps $U$ to the range $[1{:}h]$, where $h = |\{f(u)\}_{u \in U}|$.
  \STATE $\knew \leftarrow \max_{u \in U} f(u) + 1$
\ENDWHILE
\STATE \textbf{return} $f$
\end{algorithmic}
\end{algorithm}

We can initialise $f$ in different ways on line 2 of Alg.~\ref{alg:localsearch}.
In this paper we consider two approaches:
\begin{enumerate}
\item \textbf{All equal:} We set $f(u) = 1$ for every $u \in U$.
\item \textbf{All different:} We initialise $f$ to be a bijection from $U$ to the integers $[1{:}|U|]$, that is, every $i \in [1{:}|U|]$ is the initial cluster assignment one and only one $u \in U$.
\end{enumerate}
When updating $f(u)$ on line 8,
the algorithm considers all possible values in the range $[1{:}\knew]$.
Here $\knew$ is
the index of a new cluster
that does not yet exist in $f$.
The update step can thus introduce new clusters to $f$.
On line 13 the algorithm
re-assigns (``cleans up'') $f$ so that there are no gaps in the cluster indices,
meaning that the cluster indices must range from $1$ to the number of clusters.

\subsection{Practical considerations}
As discussed above in Section~\ref{sect:rcc_vs_cc},
the set $\triplets$ of triplets
may (and in practice will) contain inconsistencies.
These inconsistencies are 
a fundamental property of relative distance comparisons.
Above we also showed that
by finding a vertex cover of the constraint graph $C_\triplets$,
we can remove such inconsistencies.
While the local search heuristic described here
should in theory be unaffected by
these inconsistencies,
it is conceivable that
we will in practice obtain better results
after making $\triplets$ consistent.
That is,
in the experiments we will
consider an approach where the input to Algorithm~\ref{alg:localsearch}
is in fact $\triplets^\prime = \triplets \setminus \vc(C_\triplets)$.

This requires computing a vertex cover of $C_\triplets$.
We will employ the simple and well-known 2-approximation algorithm\footnote{This algorithm was independently proposed by F.~Gavril and M.~Yannakakis, according to \cite{PapadimitriouS82}.}
that selects edges from the input graph one by one,
while at every step removing all other edges adjacent to the selected edge (including the selected edge)
and adding the endpoints of the selected edge to the cover.
The algorithm, called \vcapprox, terminates when the set of edges becomes empty.
(For details, see e.g.~page 1025 of \cite{CormenLRS01}.)

Despite the constant factor approximation guarantee,
this algorithm can produce a fairly large cover.
In theory this does not really matter,
since in our use-case
{\em any} cover of $C_\triplets$ will result in a set of triplets $\triplets^\prime$
that is free of inconsistencies.
However,
in practice we would like there to be
a sufficient amount of information about the relative distances,
and hence $\triplets^\prime$ should remain as large as possible.
(Indeed, an empty set of triplets is free of inconsistencies,
but it is also rather useless.)

We will thus use a
simple heuristic
to further reduce the size of the cover produced by \vcapprox.
The idea is to remove all such vertices from the cover
that are redundant.
{\em A vertex is redundant in a vertex cover
if all of its neighbour vertices also belong to the cover.}
We thus consider every vertex in $\vc(C_\triplets)$ one-by-one,
and remove it from the cover if (and only if) all of its neighbours
belong to the cover.
This leaves us with a proper vertex cover of $C_\triplets$,
but the size of the cover will in practice be substantially reduced.

\section{Experiments}
\label{sect:exp}
We conduct experiments with the following variants of our local search heuristic:
\begin{itemize}
\item \textsf{Ls-EQ}: Algorithm~\ref{alg:localsearch} with the \textbf{All equal} initialisation of $f$.
\item \textsf{Ls-AD}: Algorithm~\ref{alg:localsearch} with the \textbf{All different} initialisation of $f$.
\item \textsf{Ls-EQ-VC}: Algorithm~\ref{alg:localsearch} with the \textbf{All equal} initialisation of $f$. The input triplets are first cleaned up by running the vertex cover heuristic on $C_\triplets$.
\item \textsf{Ls-AD-VC}: Algorithm~\ref{alg:localsearch} with the \textbf{All different} initialisation of $f$.  The input triplets are first cleaned up by running the vertex cover heuristic on $C_\triplets$.
\end{itemize}
Of these \textsf{Ls-AD-VC} is the one that we mainly promote and study,
others are shown for comparison.
We implemented both the local search as well as the vertex cover heuristic in JavaScript.
All experiments were run with Node.js.

We also consider two alternative approaches.
Both first compute an embedding of the items,
and then run a ``standard'' clustering algorithm with this as the input.
The first method, called \textsf{CrowdClust} below,
is the method described in \cite{GomesWKP11}.
The second, called \textsf{t-STE},
is a well-known stochastic neighbourhood embedding method
adapted to relative distance judgements in \cite{MaatenW12}.
The clustering algorithm
is the same in both methods:
a Dirichlet Process mixture model (MB-VDP) \cite{GomesWP08}.
We chose this, because like our algorithms,
it does not require setting the number of clusters in advance.
Of \textsf{CrowdClust} and MB-VDP
we use the implementations provided by Ryan Gomes\footnote{\texttt{http://www.vision.caltech.edu/~gomes/software.html}},
while of \textsf{t-STE} we use
Michael Wilber's implementation\footnote{\texttt{https://github.com/gcr/tste-theano}}.

Finally, we emphasise that
the experiments have been carried out to
{\em illustrate the experience of a naive user},
without any extensive parameter tuning for the \textsf{CrowdClust} nor \textsf{t-SNE} methods.
The number of optimisation iterations in \textsf{CrowdClust} was capped at 20,
and both methods compute a 4-dimensional embedding.
Otherwise we use default parameters suggested by the authors.
This is because we want to highlight the simplicity of
our algorithms that require
no parameter tuning of any kind,
and thus competing approaches should also work pretty much ``out of the box''.

\subsection{Experiment 1: Artificial data with known ground truth}
\label{sect:exp1}
\begin{figure*}[t]
\centering
\includegraphics[width=0.32\textwidth]{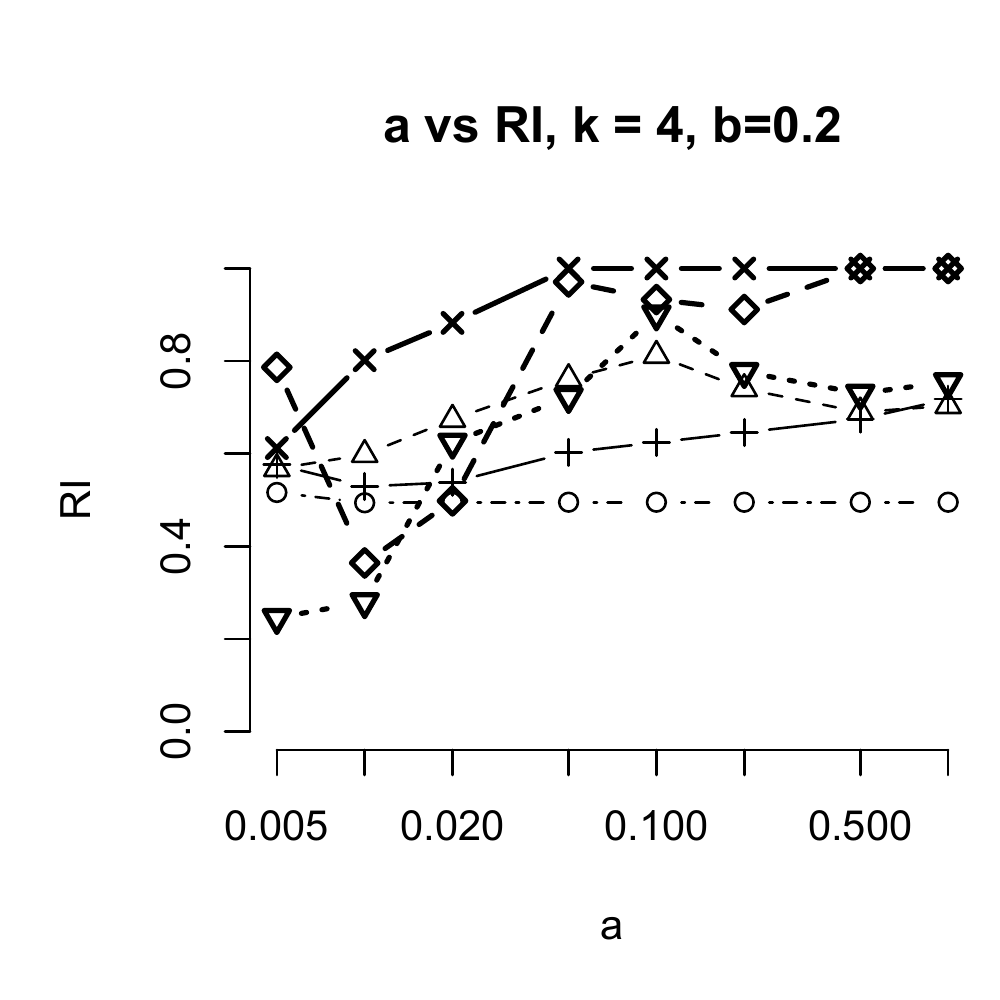}
\includegraphics[width=0.32\textwidth]{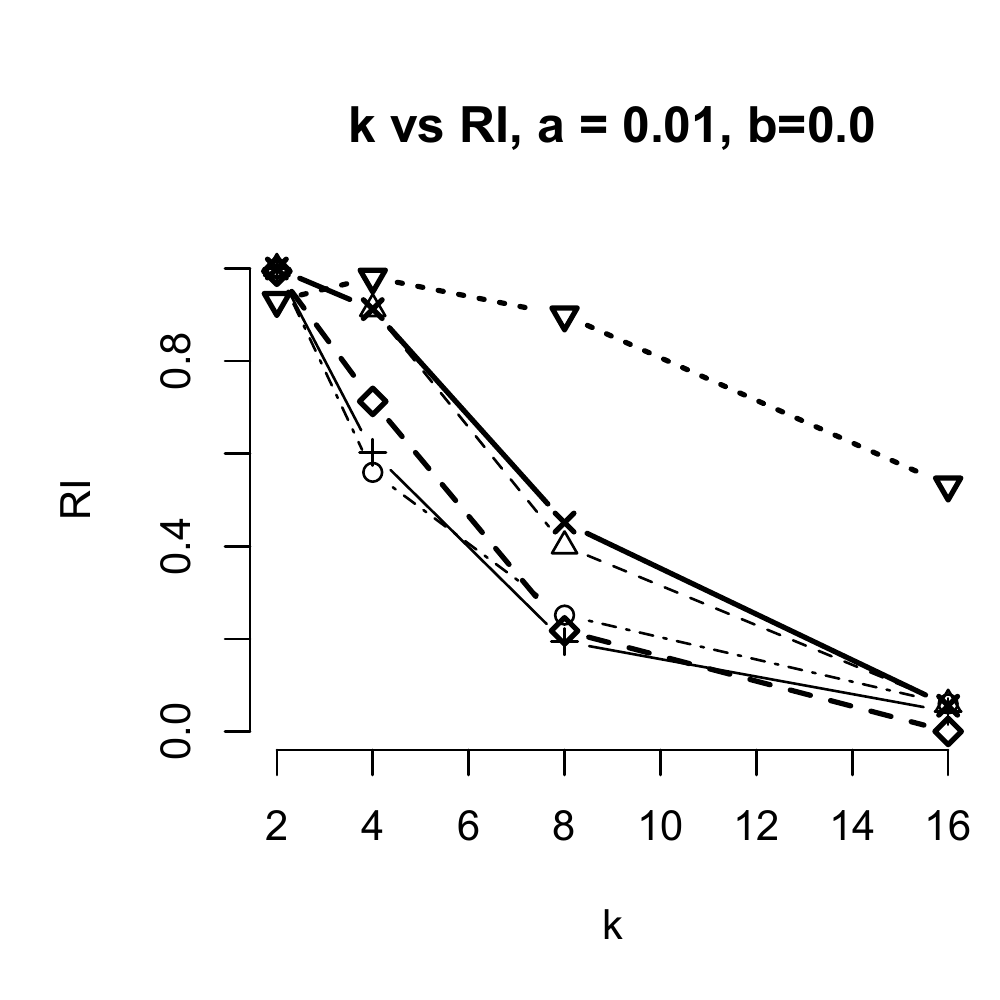}
\includegraphics[width=0.32\textwidth]{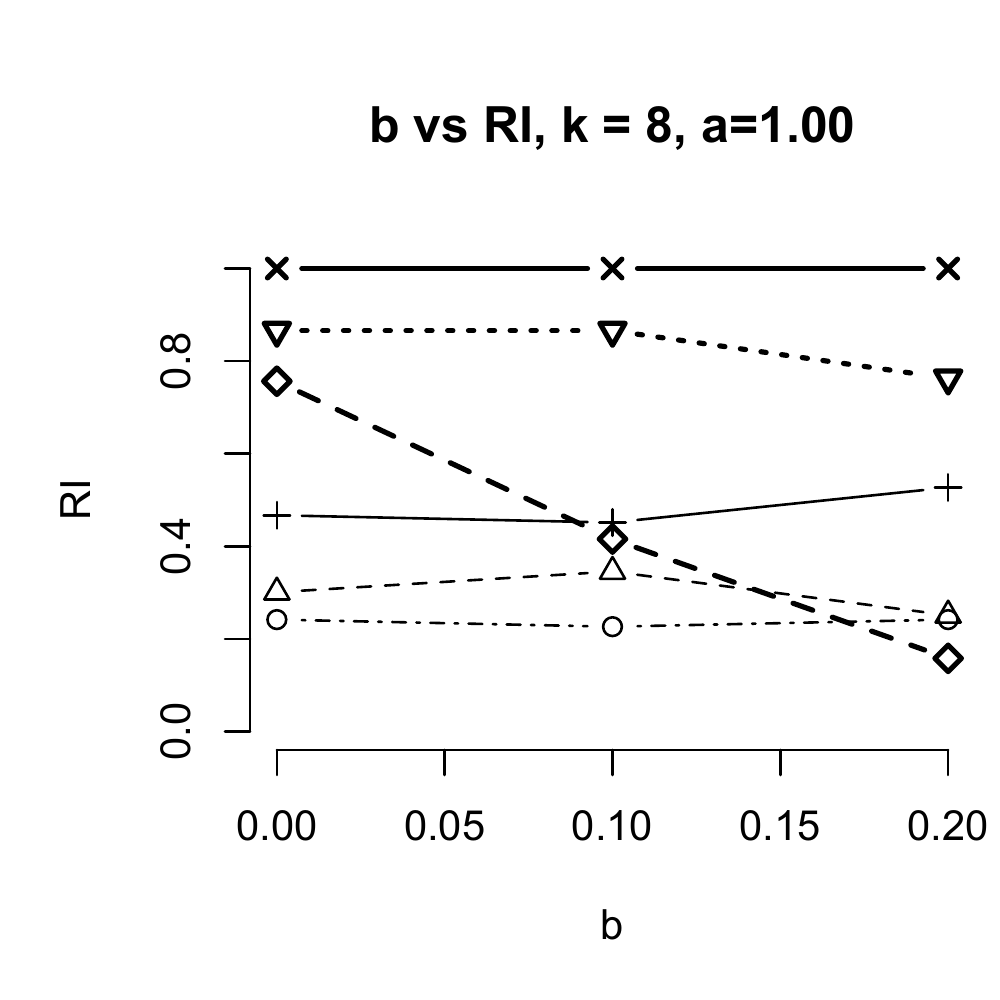}\\
\includegraphics[width=0.32\textwidth]{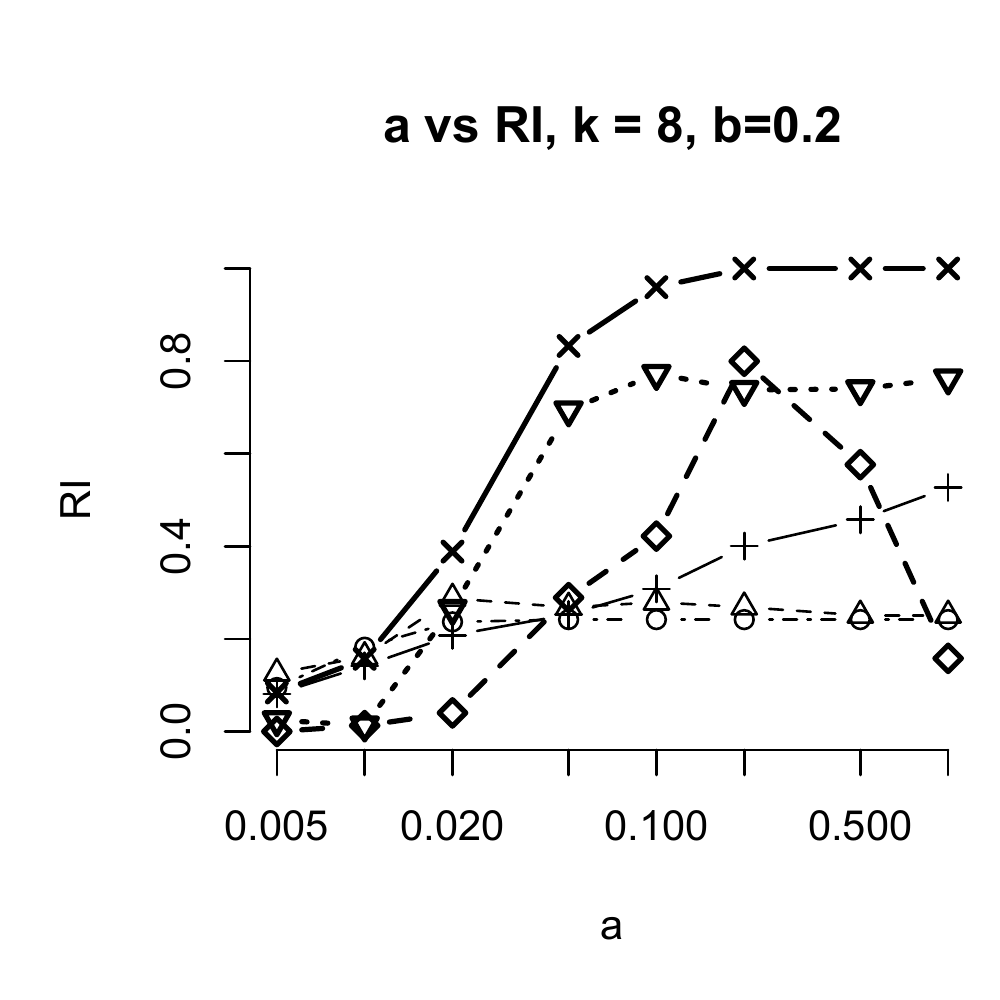}
\includegraphics[width=0.32\textwidth]{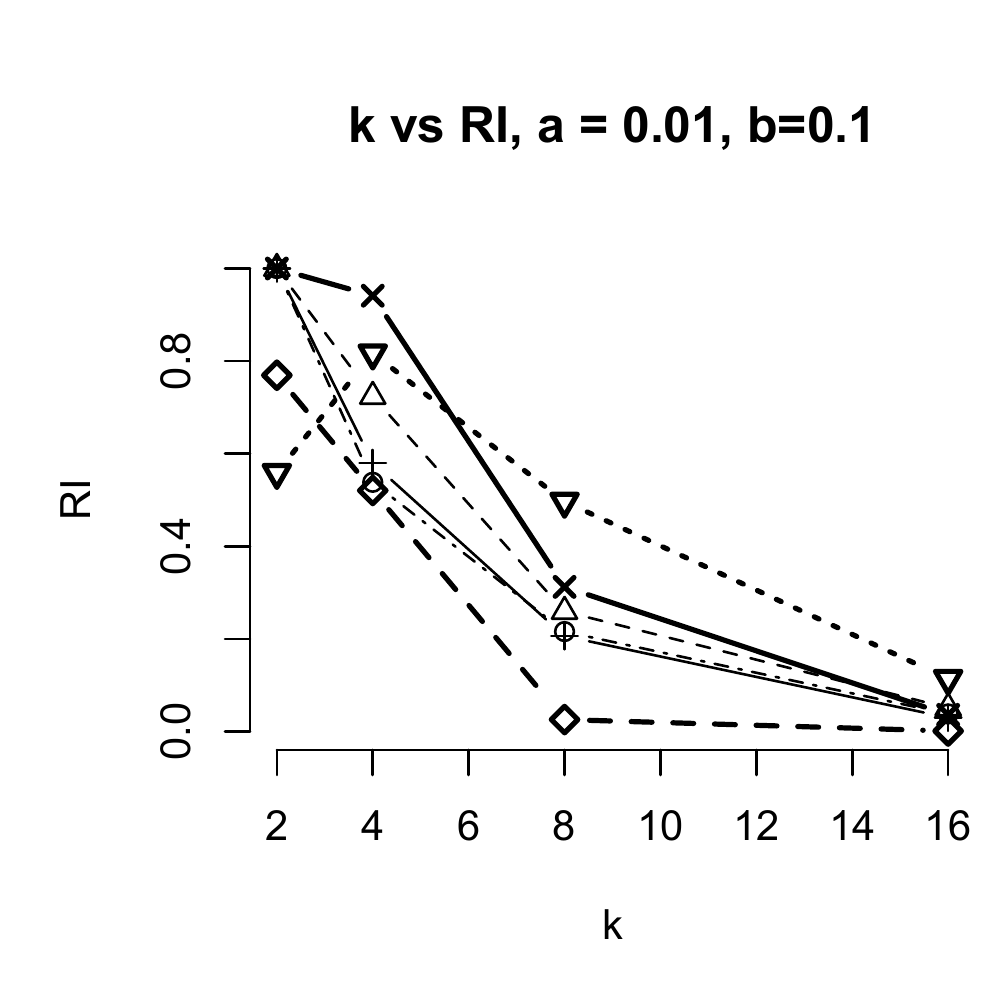}
\includegraphics[width=0.32\textwidth]{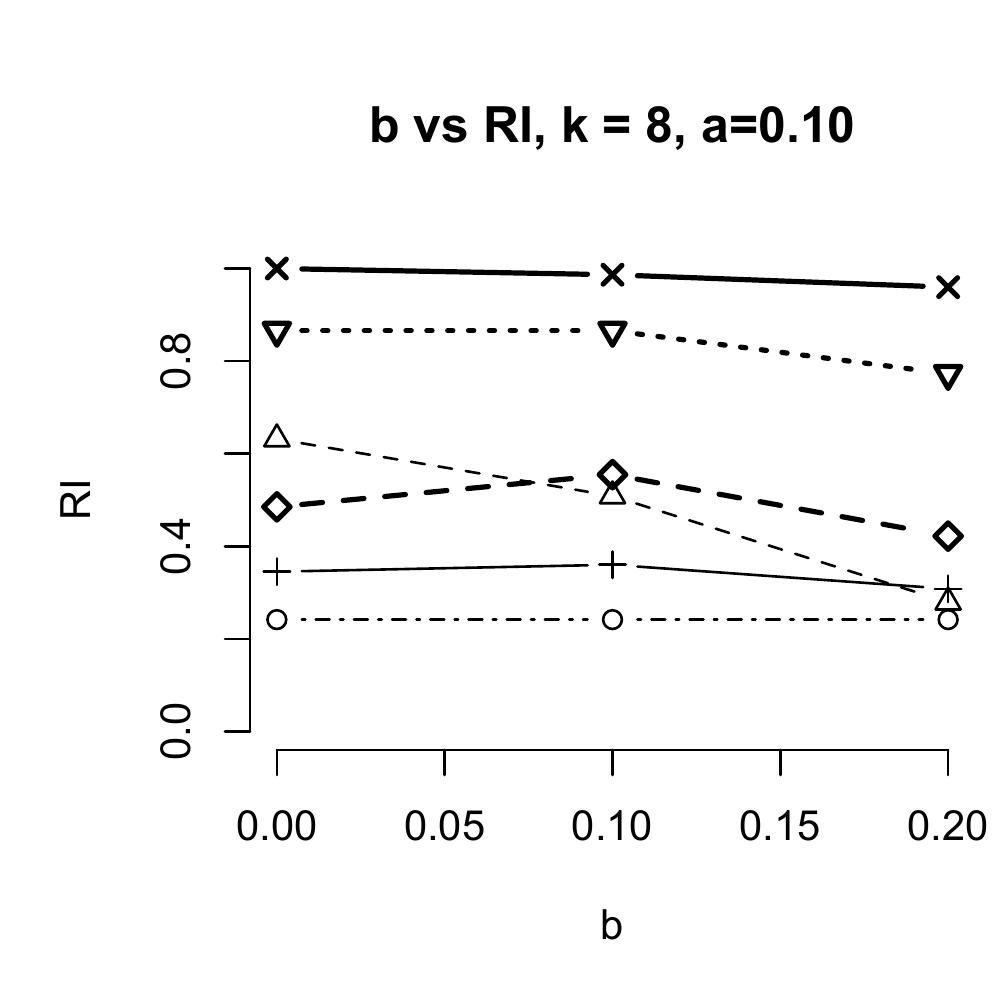}\\
\includegraphics[width=0.32\textwidth]{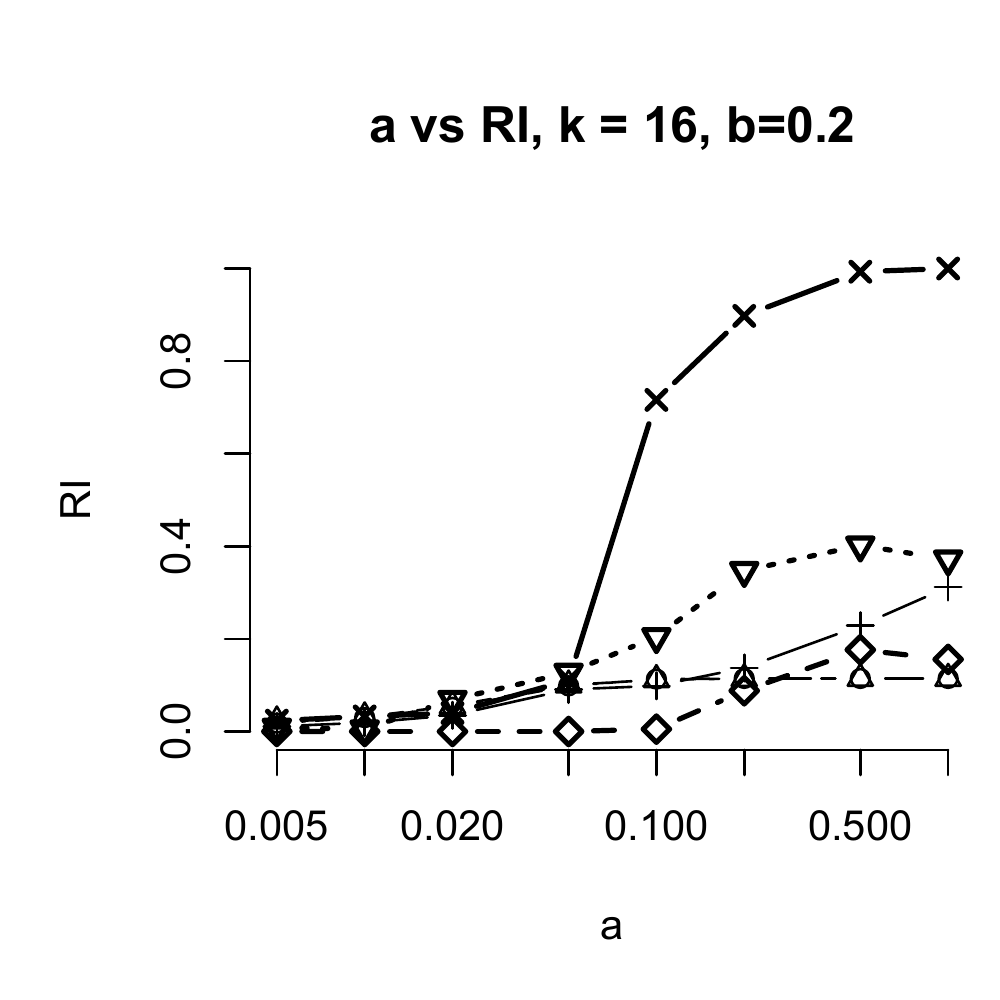}
\includegraphics[width=0.32\textwidth]{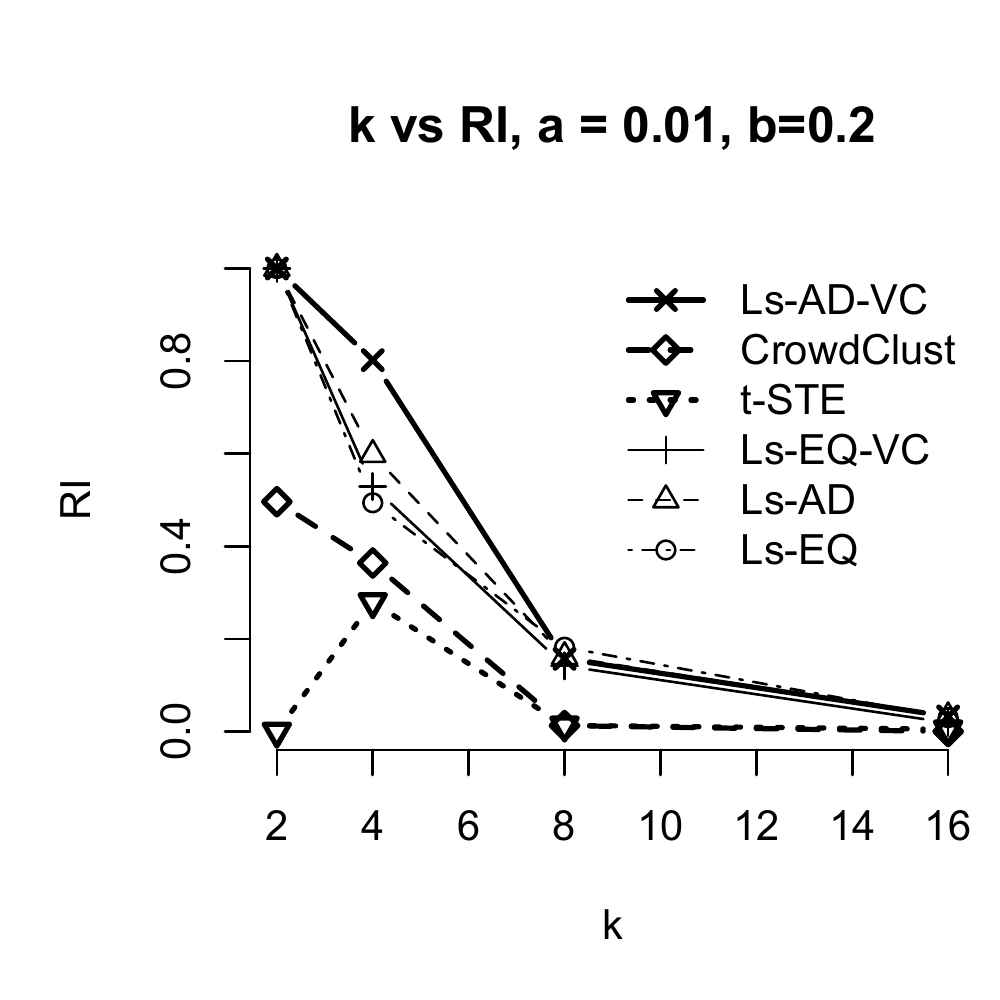}
\includegraphics[width=0.32\textwidth]{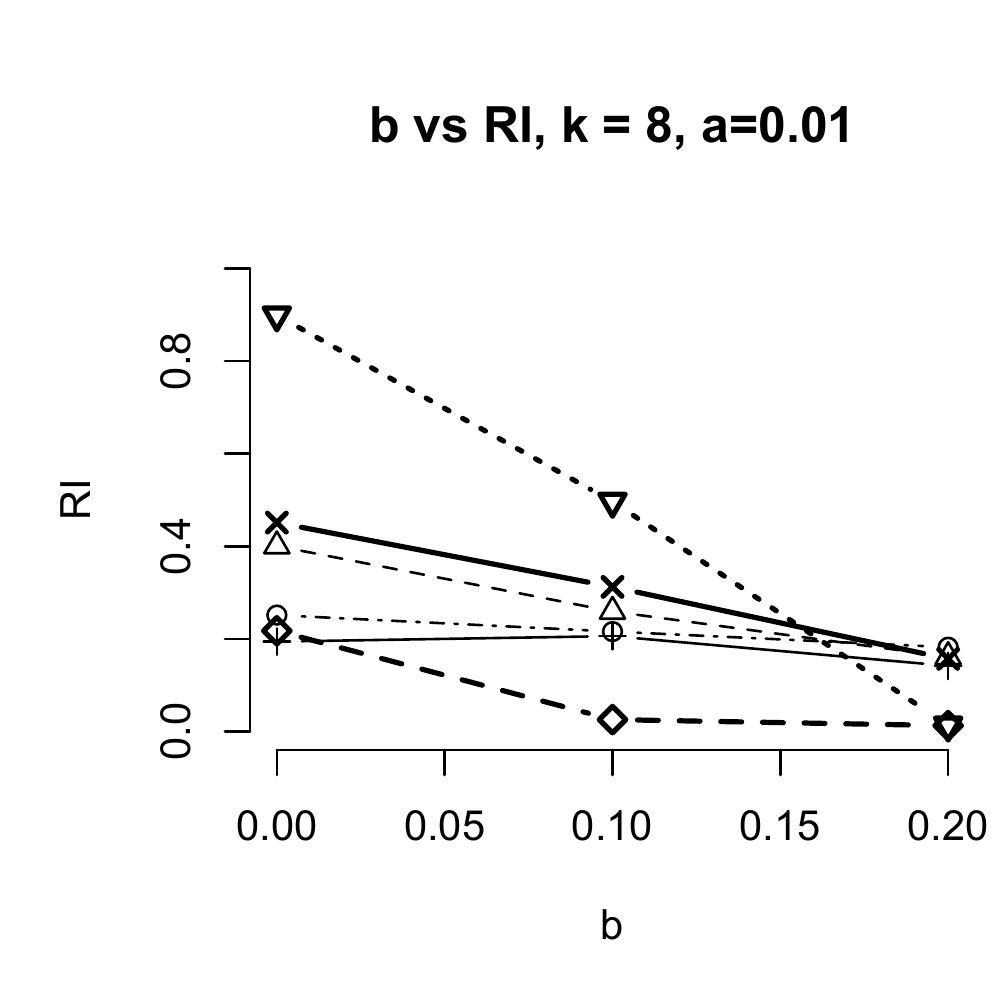}\\
\caption{Results of Experiment 1. The panels show the adjusted Rand index as a function of different triplet generator parameters ($a$ left, $k$ middle and $a$ right). See Section~\ref{sect:exp1} for details. Legend for all panels is shown in the center bottom panel.}
\label{fig:exp1main}
\end{figure*}
\textbf{Setup:}
We generate a set $\triplets$ of triplets from
a known ground truth clustering $f^*$ over $160$ items,
and measure how well the algorithms can recover $f^*$ given the triplets.
The triplets are generated by first
constructing all possible triplets
(of which there are ${160 \choose 3} = 669,920$ in this case),
then selecting a random subset of these to include in $\triplets$,
and finally by introducing noise to some randomly chosen triplets
by swapping the outlying item with one of the two other items (chosen at random).
The process is thus parametrised by
three quantities:
\begin{itemize}
\item[$k$]: the number of clusters in $f^*$,
\item[$a$]: the fraction of triplets (out of all possible triplets) to include in $\triplets$, and
\item[$b$]: the fraction of ``noisy'' triplets in $\triplets$.
\end{itemize}
We let $k \in \{2, 4, 8, 16\}$,
$a \in \{0.005, 0.01, 0.02,0.05, 0.1,$ $0.2, 0.5, 1.0\}$,
and $b \in \{0, 0.1, 0.2\}$.
When $a = 1$ and $b = 0$
the process outputs all possible triplets without any noise.
With other choices of $a$ and $b$
the triplet generator is non-deterministic.
For each of such combination of $a$ and $b$
we generate 10 independent sets of triplets.
This results in 924 test cases (including repeated trials) in total.
We run the algorithms for all of these,
and report averages over the 10 trials
for given values of $a$, $b$ and $k$.
We evaluate the algorithms by comparing the found clusterings
to a ground truth in terms of the adjusted Rand index (RI, larger values better) \cite{hubert1985comparing}.

\textbf{Results:} We find that in 894 out of the 924
test cases ($\approx 97$\%), \textsf{Ls-AD-VC}
has at least the same RI value as \textsf{CrowdClust}.
When compared against \textsf{t-STE},
\textsf{Ls-AD-VC} performs at least as well in
751 out of 924 cases ($\approx 81$\%) in terms of RI.
That is,
in most cases our approach seems to
do a better job at
reconstructing the true clustering $f^*$.
When comparing \textsf{Ls-AD-VC} against \textsf{Ls-AD}
(the pure local search heuristic without
vertex cover based pre-processing),
we find that in 813 out of the 924 test cases ($\approx 88$ percent)
\textsf{Ls-AD-VC} has better (or the same) performance in terms of RI.
This suggests that
pre-processing $\triplets$ so that there are no
inconsistencies is useful.
Finally,
when comparing the two initialisation strategies
(all equal, all different),
we find that in 871 out of the 924 test cases ($\approx 94$ percent)
\textsf{Ls-AD-VC} outperforms \textsf{Ls-EQ-VC}.
The local search heuristic is thus more successful in reconstructing $f^*$
when it starts from a configuration where all points
are in different clusters.

A more fine-grained analysis
of how different parameters of
the triplet generator affect performance
is shown in Figure~\ref{fig:exp1main}.
The left, middle, and right columns
in Figure~\ref{fig:exp1main}
show RI
as a function of $a$, $k$ and $b$, respectively.

From the left column we observe that
when only a small fraction of triplets are available,
and there is a fair amount of erroneous triplets included in the input ($b = 0.2$),
all algorithms have difficulties
reconstructing the original $f^*$.
This is especially true as
the number of clusters in $f^*$ increases, as one might expect.
With $k=4$ (topmost panel)
both \textsf{Ls-AD-VC} and \textsf{CrowdClust}
can recover $f^*$ (almost) perfectly
when $a \geq 0.05$,
while \textsf{t-STE} does not do bad either.
When $k \geq 8$ (middle and bottom panel in left column) both
\textsf{Ls-AD-VC} and \textsf{t-STE} seem to
produce reasonable results (for $a \geq 0.1$),
with \textsf{Ls-AD-VC} finding
a near-perfect clustering as long as $a$ is large enough.

However, in practical human computation applications
small values of $a$ may be more relevant,
as this corresponds to fewer triplets, and hence less work by the human annotators.
This situation is highlighted in the middle column,
where we consider RI as a function of $k$
with fixed $a = 0.01$, and different values of $b$.
In the absence of noise ($b = 0$, top panel),
\textsf{t-STE} is a clear winner.
However, as noise is introduced,
\textsf{Ls-AD-VC},
as well as the other correlation clustering based methods,
outperform the two embedding approaches.
For a large number of clusters (say, $k \geq 8$),
the problem is very hard for all methods,
but when the ground truth clustering only contains a few clusters,
it seems that \textsf{Ls-AD-VC}
can give substantially better results than other methods.


Finally,
the rightmost column of Figure~\ref{fig:exp1main}
shows how the fraction of erroneous triplets $b$
affects the algorithms with fixed $k=8$ and different values of $a$.
We observe that when there are a lot of triplets ($a \geq 0.1$),
\textsf{Ls-AD-VC} is more or less unaffected by
the presence of noise,
and \textsf{t-STE} is a close second.
When there are only few triplets ($a = 0.01$),
\textsf{t-STE} works very well when there is no noise,
but its performance rapidly decreases as $b$ increases.


\begin{table*}[t]
\centering
\caption{Experiment 1: Average number of clusters found by the algorithms for different parameters of the triplet generator.}
\label{tbl:exp1k}
\begin{tabular}{ll|rrrr|rrrr|rrrr|rrrr}
\hline
\hline
 & & \multicolumn{4}{c|}{a = 1} & \multicolumn{4}{c|}{a = 0.1} & \multicolumn{4}{c|}{a = 0.05} & \multicolumn{4}{c}{a = 0.01} \\
 & & \multicolumn{4}{c|}{k} & \multicolumn{4}{c|}{k} & \multicolumn{4}{c|}{k} & \multicolumn{4}{c}{k} \\
 & & 2 & 4 & 8 & 16 & 2 & 4 & 8 & 16 & 2 & 4 & 8 & 16 & 2 & 4 & 8 & 16 \\
\hline
b = 0 & \textsf{Ls-AD-VC} & 2 & 4 & 8 & 16 & 2 & 4 & 8 & 14.3 & 2 & 4 & 8 & 10.4 & 2 & 3.7 & 4.1 & 3 \\
& \textsf{Ls-EQ-VC} & 2 & 3 & 4 & 6 & 2 & 2.8 & 2.8 & 2.4 & 2 & 2.7 & 2.5 & 2.4 & 2 & 2.5 & 2.2 & 2.2 \\
& \textsf{CrowdClust} & 6 & 4 & 6 & 6 & 3.5 & 4 & 4.8 & 3.8 & 3.1 & 4 & 4 & 2 & 2.1 & 3.4 & 3 & 1.4 \\
& \textsf{t-STE} & 4 & 7 & 7 & 14 & 4 & 4.9 & 7 & 13.4 & 4.2 & 4 & 7.1 & 13.7 & 2.5 & 4.3 & 7.5 & 11.9 \\
\hline
b = 0.1 & \textsf{Ls-AD-VC} & 2 & 4 & 8 & 16 & 2 & 4 & 7.9 & 12.7 & 2 & 4 & 7.9 & 6.5 & 2 & 3.8 & 3.5 & 2.8 \\
& \textsf{Ls-EQ-VC} & 2 & 3 & 4 & 5.3 & 2 & 2.6 & 2.9 & 2.2 & 2 & 2.2 & 2.4 & 2.3 & 2 & 2.4 & 2.3 & 2.3 \\
& \textsf{CrowdClust} & 3.7 & 4 & 5.2 & 5.6 & 2.8 & 3.6 & 5.2 & 1.8 & 2.4 & 3.7 & 2.6 & 1.7 & 2.6 & 2.5 & 1.2 & 1.2 \\
& \textsf{t-STE} & 1.4 & 7.4 & 7 & 11.6 & 1 & 5.3 & 7 & 12.4 & 1 & 4 & 7.3 & 10.2 & 2 & 4 & 5.9 & 3.9 \\
\hline
b = 0.2 & \textsf{Ls-AD-VC} & 2 & 4 & 8 & 16 & 2 & 4 & 7.7 & 12.4 & 2 & 4 & 6.9 & 3.5 & 2 & 3.4 & 2.8 & 2.4 \\
& \textsf{Ls-EQ-VC} & 2 & 3 & 4.3 & 4.6 & 2 & 2.6 & 2.5 & 2.4 & 2 & 2.5 & 2.1 & 2.4 & 2 & 2.4 & 2.4 & 2.3 \\
& \textsf{CrowdClust} & 2.8 & 4 & 3.3 & 5.3 & 2.5 & 3.8 & 3.9 & 1.3 & 1.8 & 3.9 & 3.4 & 1 & 1.7 & 2.1 & 1.2 & 1 \\
& \textsf{t-STE} & 1 & 8.2 & 6.8 & 10 & 1 & 5.1 & 6.7 & 5.4 & 1 & 3.3 & 6.3 & 3.4 & 1 & 1.9 & 1.2 & 1.1 \\
\hline
\hline
\end{tabular}

\end{table*}
What kind of errors do the algorithms make, then?
A simple way to address this question is
to consider the size of the resulting clustering.
Indeed, all algorithms should in theory be able to
find the correct number of clusters in ideal settings.
Table~\ref{tbl:exp1k} shows
the number of clusters (averages over 10 instances)
in the solution returned by the algorithms
for different values of $a$, $b$ and $k$.
(Here we only consider
\textsf{Ls-AD-VC}, \textsf{Ls-EQ-VC}, \textsf{CrowdClust}, and \textsf{t-STE}.)
We find that in most cases,
the algorithms tend to {\em underestimate the number of clusters}.
As suggested by the results in Fig.~\ref{fig:exp1main},
\textsf{Ls-AD-VC} performs very well
when the number of triplets is large ($a$ is large),
irrespectively of the amount of noise in the input.
Also, it tends to outperform \textsf{t-STE}
for small values of $a$ when there is a lot of noise ($b$ increases)
and $k$ is small.

\subsection{Experiment 2: Example clusterings with real triplet data}
\begin{figure}
\centering
\includegraphics[width=\columnwidth]{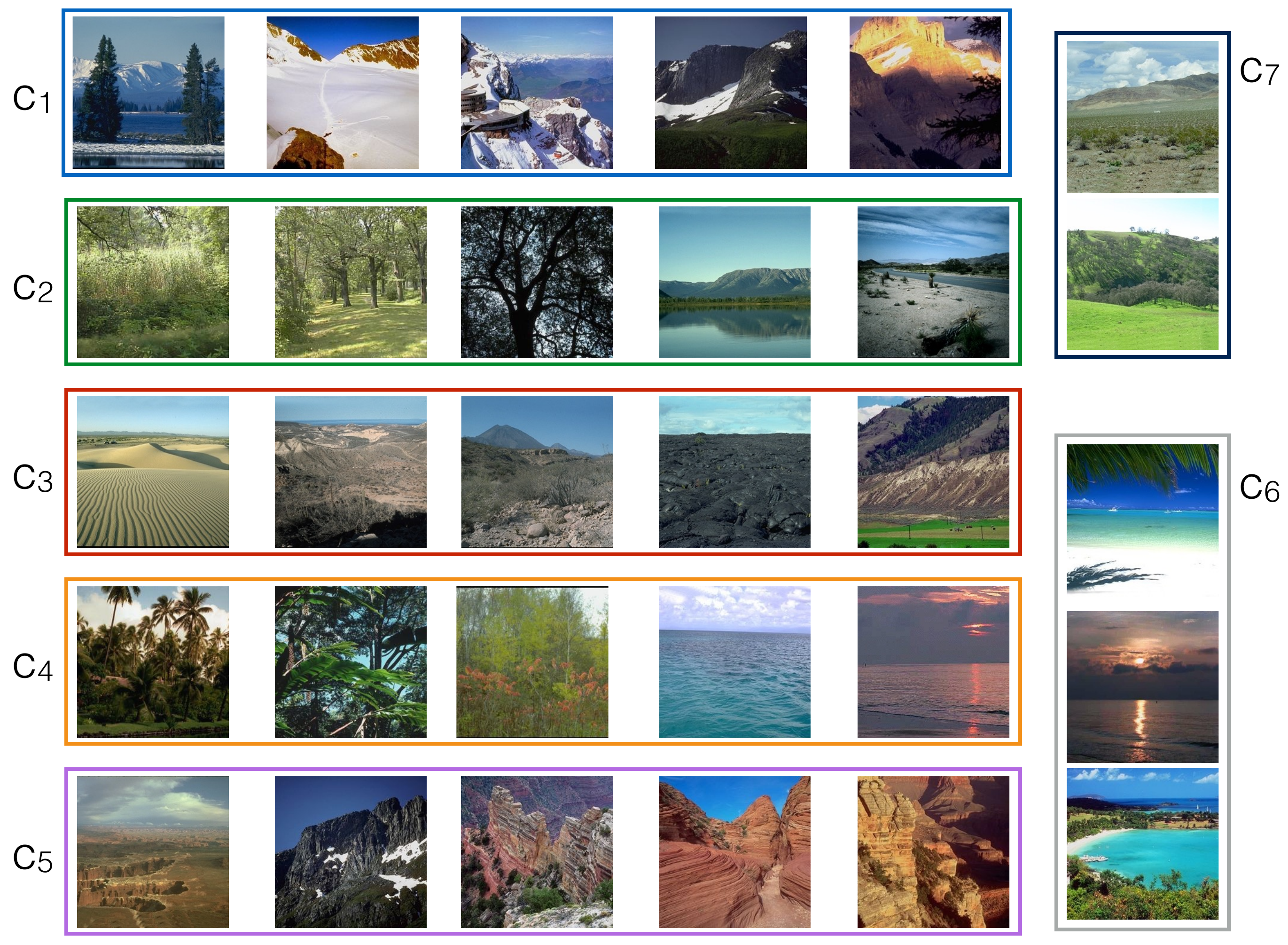}
\caption{Clustering of the \textbf{Nature} dataset found by \textsf{Ls-AD-VC}. The figure shows a {\em random sample of five items} from clusters 1--5, and clusters 6 and 7 in full. Cluster 1 contains images of snowy mountains, cluster 2 of wooden flat areas or forest, cluster 3 of drylands and deserts. Cluster 4 contains a mixture of tropical trees and open water with a clearly visible horizon, cluster 5 contains images of rocky mountains. Clusters 6 and 7 seem like extra clusters / outliers that could also be merged with some of the larger clusters.}
\label{fig:nature}
\end{figure}
\begin{figure}
\centering
\includegraphics[width=\columnwidth]{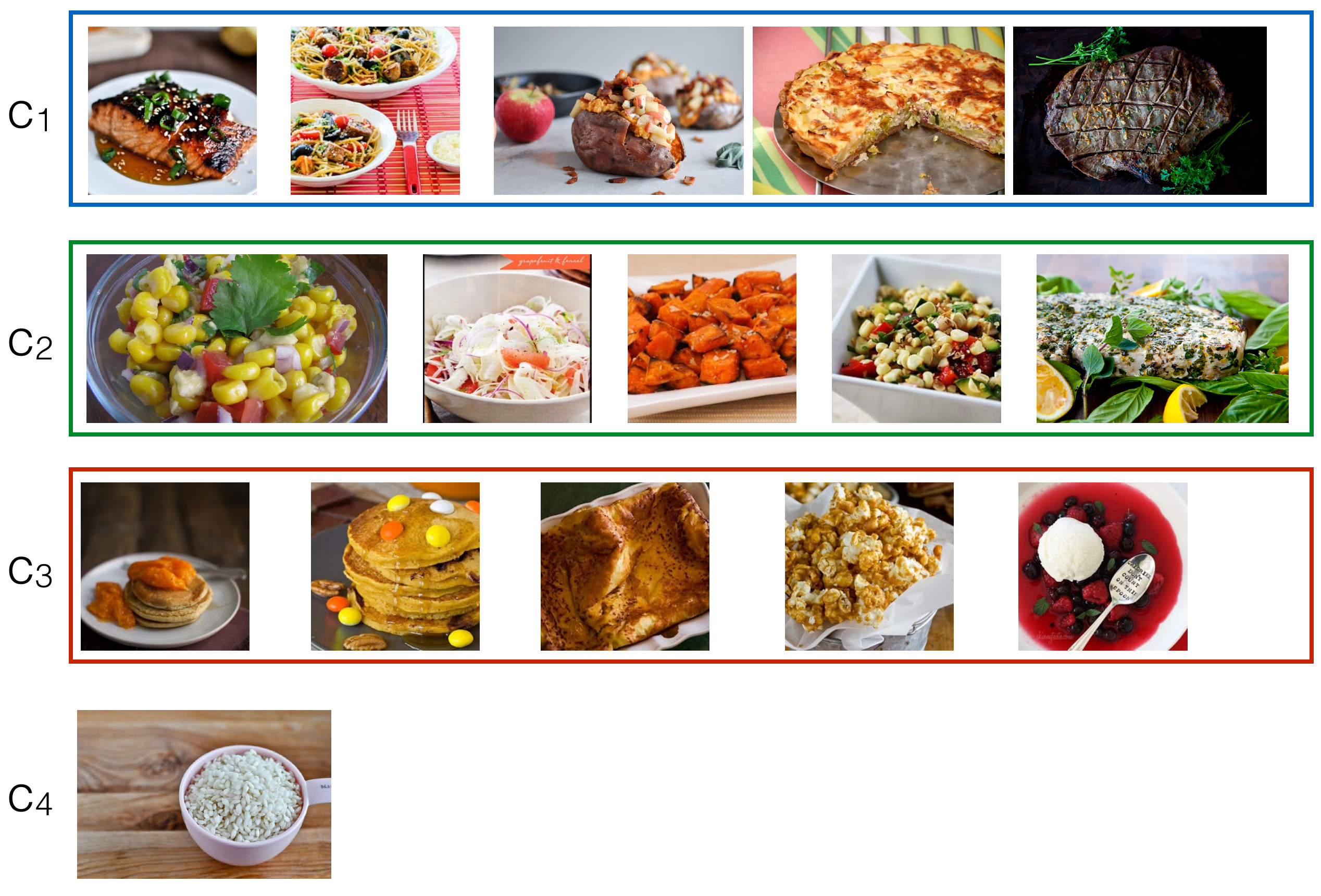}
\caption{Clustering of the \textbf{Food} dataset found by \textsf{Ls-AD-VC}. The figure shows a {\em random sample of five items} from clusters 1--3, cluster 4 only contains a single item. Cluster 1 contains images of savoury main courses, cluster 2 contains images of salads and other kinds of vegetables, cluster 3 corresponds to desserts and sweet dishes, and cluster 4 is a single outlier with an image of a cup of rice.}
\label{fig:food}
\end{figure}
\textbf{Setup:} In our second experiment
we present two case studies that
highlight how the \textsf{Ls-AD-VC} algorithm
works on real, crowdsourced data.
We obtained two sets of relative distance comparison triplets
from the authors of \cite{HeikinheimoU13} and \cite{WilberKB14}.
The first one (\textbf{Nature}),
originally used in \cite{HeikinheimoU13} to run a crowd-powered $k$-means algorithm,
contains a set of 3357 triplets
over 120 images of natural scenes
of four categories
({\em coast}, {\em open country}, {\em forest}, {\em mountain}),
from the Scene
image collection\footnote{\texttt{http://cvcl.mit.edu/database.htm}} \cite{OlivaT01}.
The second one (\textbf{Food}),
collected by the authors of \cite{WilberKB14} from Yummly.com,
contains 190,376 triplets over 100 images of
various dishes of food.

Note that the two datasets are of different ``densities''
(recall the parameter $a$ from above):
\textbf{Nature} contains only about 1 percent of all possible $120 \choose 3$ triplets,
while \textbf{Food} contains in fact {\em more than 100 percent}
of all possible triplets, i.e., \textbf{Food} contains several
instances of duplicated triplets.

We did {\em not perform any kind of pre-processing or cleanup
of the data in either case}.
The triplets may thus be noisy, conflicting
(i.e., for the same set of three items,
two triplets may indicate different items as the outlier), etc.
This decision was deliberate,
because we want to
illustrate the experience of a naive user,
who wants results quickly, and as simply as possible,
e.g.~without first running
a complex consensus model \cite{SheshadriL13} to clean up erroneous inputs.
Certainly there are situations in which
using those methods is really necessary,
but here we want to show
how \textsf{Ls-AD-VC} fares with
``raw'' human computation data.
(One can always argue that results
should only improve with more complex pre-processing.)
We thus run the experiment simply
by running \textsf{Ls-AD-VC} on all available triplets.

\textbf{Results:}
The found clusterings are shown in figures \ref{fig:nature} and \ref{fig:food}
for \textbf{Nature} and \textbf{Food}, respectively.
Of clusters with more than five items,
we show only {\em five randomly selected items}.
From \textbf{Nature} the method finds 7 clusters,
two of which are very small,
while from \textbf{Food} we find four clusters,
one of which contains only a single(!) item.
As can be seen from the figures,
the found clusterings are very intuitive in both cases.
In particular, thanks to the large number of triplets,
the result with \textbf{Food} is extremely good,
and really provides strong evidence for
the correlation clustering based method to be
both simple and very efficient.
Also, it is rather remarkable that the method
can find a very good clustering from \textbf{Nature},
despite there being very few triplets.
However, this is
in accordance with the results from synthetic clusterings in Experiment 1,
where we show that \textsf{Ls-AD-VC}
can find reasonable clusterings even
in this situation ($a = 0.01$)
as long as the underlying clustering is not too fine grained.

Finally,
we want to point out that
\textsf{Ls-AD-VC} is extremely fast:
the runtimes with \textbf{Nature} and \textbf{Food}
are $< 1$ second and $\approx 11$ seconds, respectively.

\section{Conclusion and Future Work}
We defined, analysed,
and provided both an approximation algorithm,
as well as a practical local search algorithm for
a novel \textsc{Correlation-Clustering} variant based on relative distance comparisons.
We also showed empirically that
the approach has certain advantages over existing methods
for clustering with relative distance comparisons.

Our method is motivated by human computation approaches to clustering.
However,
an interesting property of our approach is that
it does not require guessing the number of clusters in advance.
Given that an arbitrary distance matrix can always be
used to generate relative distance comparisons in a parameter-free manner,
it seems interesting to
investigate if and how the approach could be used as a generic
non-parametric clustering algorithm.
Moreover,
it seems relevant to understand what kind of clusterings
our method can find when used in such ways.
For instance, does
the distance between two clusters and cluster diameter affect the outcome?
Finally,
implementing
agglomerative clustering approaches
using relative distances also seems of interest.
First steps towards this have been taken in \cite{HaghiriGL17}.

\bibliographystyle{IEEEtran}
\bibliography{rcc} 

\appendix
\subsection{Proof of Lemma~\ref{lemma:inconsistency}}
\label{app:inconsistency}
\begin{proof}
For two triplets $t_1$ and $t_2$ to be inconsistent,
there must exist the items $a$ and $b$ such that
$t_1$ is satisfied only when $f(a) = f(b)$,
and $t_2$ is satisfied only when $f(a) \neq f(b)$.
Consider only those edges in $C_\triplets$ that exist
due to inconsistencies
induced by the items $a$ and $b$.
These edges must form a bipartite clique.
Since $\vc(C_\triplets)$ is a vertex cover of $C_\triplets$,
it {\em must} contain a vertex cover of
every bipartite clique in $C$.
A proper vertex cover of any bipartite clique must
contain at least all vertices from ``one side'' of the clique,
otherwise it cannot cover all edges in the clique.
By the above reasoning,
in every bipartitie clique of $C$
at least one of the ``sides'' must be completely contained in $\vc(C)$.
Since all triplets that belong to
the same ``side'' of a bipartite clique
agree about the items $a$ and $b$,
the triplets that remain in $\triplets \setminus \vc(C)$
can not be inconsistent.
\end{proof}

\subsection{Proof of Theorem~\ref{thm:approx}}
\label{app:approxproof}
Before giving the proof, we present three technical lemmas that are needed
to establish the result.
\begin{lemma}
\label{lemma:sw}
Let $\triplets^\prime = \triplets \setminus \vc(C_\triplets)$,
and denote by $G_{\triplets^\prime}$ the associated \textsc{Correlation-Clustering} instance.
We have
\[
s(f, \triplets^\prime) \leq c(f, G_{\triplets^\prime}) \leq 2 \; s(f, \triplets^\prime)
\]
for any clustering function $f$,
where $c$ and $s$ are the cost functions of
Problems \ref{prob:cc} and \ref{prob:rcc}, respectively.
\end{lemma}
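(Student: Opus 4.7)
The plan is to rewrite $c(f, G_{\triplets^\prime})$ as a sum over triplets and then do a per-triplet comparison with the indicator summands defining $s(f, \triplets^\prime)$.

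First I would use the definition of the weight function from Definition~\ref{def:residualcc}, namely $w(u,v) = |\triplets^\prime_{u,v}|$, to swap the order of summation in the cost function $c$. Since each triplet $t = (a,b,c) \in \triplets^\prime$ contains exactly the three pairs $\{a,b\}$, $\{a,c\}$, $\{b,c\}$, this rewriting should yield
\[
c(f, G_{\triplets^\prime}) = \sum_{(a,b,c)\in\triplets^\prime} \Bigl( \I\{f(a)\neq f(b)\}\,\I\{\{a,b\}\in E^+\} + \I\{f(a)=f(c)\}\,\I\{\{a,c\}\in E^-\} + \I\{f(b)=f(c)\}\,\I\{\{b,c\}\in E^-\}\Bigr).
\]
By Lemma~\ref{lemma:inconsistency}, the triplets in $\triplets^\prime$ are mutually consistent, so every pair appearing in some triplet of $\triplets^\prime$ is unambiguously placed in $E^+$ or $E^-$ via Definition~\ref{def:residualcc}. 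In particular, for any $(a,b,c) \in \triplets^\prime$ we have $\{a,b\} \in E^+$ (neither $a$ nor $b$ is an outlier here) and $\{a,c\}, \{b,c\} \in E^-$ (each of these pairs contains an outlier). Hence the three $E^\pm$ indicators above are all equal to $1$, and the per-triplet contribution to $c$ simplifies to $\I\{f(a)\neq f(b)\} + \I\{f(a)=f(c)\} + \I\{f(b)=f(c)\}$.

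Next I would do a small case analysis on each triplet. If the triplet is satisfied by $f$, i.e.\ $f(a)=f(b)\neq f(c)$, then its contribution to $c$ is $0+0+0=0$ and its contribution to $s$ is also $0$. If the triplet is unsatisfied, I would enumerate the remaining possibilities for the restricted equivalence pattern of $(f(a),f(b),f(c))$ (all equal; all distinct; $f(a)=f(c)\neq f(b)$; $f(b)=f(c)\neq f(a)$). In each of these four cases the contribution to $c$ is either $1$ or $2$, while the contribution to $s$ is always exactly $1$. Thus for every triplet individually,
\[
\I\{f(a)\neq f(b)\vee f(a)=f(c)\vee f(b)=f(c)\} \;\leq\; \I\{f(a)\neq f(b)\} + \I\{f(a)=f(c)\} + \I\{f(b)=f(c)\} \;\leq\; 2\,\I\{f(a)\neq f(b)\vee f(a)=f(c)\vee f(b)=f(c)\}.
\]
Summing these per-triplet inequalities over $\triplets^\prime$ immediately gives the desired chain $s(f,\triplets^\prime) \leq c(f, G_{\triplets^\prime}) \leq 2\, s(f,\triplets^\prime)$.

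The only subtlety, and what I consider the main obstacle, is justifying that the decomposition of $c(f, G_{\triplets^\prime})$ into a clean sum over triplets is valid; this hinges on the consistency guarantee of Lemma~\ref{lemma:inconsistency}, which ensures that no pair appearing in a triplet of $\triplets^\prime$ is simultaneously labeled $+$ and $-$, and on the fact that each triplet in $\triplets^\prime$ places its three pairs into the ``expected'' sides of $E^\pm$. Once this bookkeeping is set, the rest is an elementary case check.
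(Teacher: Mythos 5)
Your proposal is correct and follows essentially the same route as the paper's proof: rewrite $c(f, G_{\triplets^\prime})$ as a per-triplet sum of the three indicators $\I\{f(a)\neq f(b)\}$, $\I\{f(a)=f(c)\}$, $\I\{f(b)=f(c)\}$ (using that each triplet places $\{a,b\}$ in $E^+$ and the other two pairs in $E^-$), then compare term-by-term with the single disjunctive indicator in $s$. Your explicit enumeration of the equivalence patterns of $(f(a),f(b),f(c))$ is just a more detailed version of the paper's observation that at most two of the three conditions can hold simultaneously, so the arguments coincide.
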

\begin{proof}
We rewrite the cost function $c$ of Problem~\ref{prob:cc}
as a sum over $\triplets^\prime$.
By definition of $G_{\triplets^\prime}$,
the weight $w(u,v)$ is equal to the
size of $\triplets^\prime_{u,v}$, i.e.,
the number of triplets
that contain both items $u$ and $v$.
We can write
\[
\begin{split}
c(f, G_{\triplets^\prime}) = \sum_{(u,v) \in E^+} & \sum_{t \in \triplets^\prime_{u,v}} \I\{ f(u) \neq f(v) \} + \\
&  \sum_{(u,v) \in E^-} \sum_{t \in \triplets^\prime_{u,v}} \I\{ f(u) = f(v) \}.
\end{split}
\]
Instead of summing over $E^+$ and $E^-$ separately,
we simply sum over all $(u,v)$ and use an indicator function to select the appropriate case.
Note that any pair $(u,v)$ can only belong to either $E^+$ or $E^-$ but not both.
This yields
\[
\begin{split}
c(f, G_{\triplets^\prime}) = \sum_{(u,v)} & \sum_{t \in \triplets^\prime_{u,v}} ( \I\{(u,v) \in E^+\} \I\{ f(u) \neq f(v) \} + \\
&  \I\{(u,v) \in E^-\} \I\{ f(u) = f(v) \} ).
\end{split}
\]
We then change the order of summation to run first over $\triplets^\prime$, and then over the pairs in every triplet:
\[
\begin{split}
c(f, G_{\triplets^\prime}) = \sum_{t \in \triplets^\prime} & \sum_{(u,v) \in t} ( \I\{(u,v) \in E^+\} \I\{ f(u) \neq f(v) \} + \\
&  \I\{(u,v) \in E^-\} \I\{ f(u) = f(v) \} ).
\end{split}
\]
Finally,
because in every triplet $t = (a,b,c)$ there is precisely one pair, $(a,b)$, that belongs to $E^+$, while the other two pairs, $(a,c)$ and $(b,c)$, belong to $E^-$, we obtain:
\[
\begin{split}
c(f, G_{\triplets^\prime}) = & \sum_{(a,b,c) \in \triplets^\prime} ( \I\{ f(a) \neq f(b) \} \; + \\
& \I\{ f(a) = f(c) \} \; + \; \I\{ f(b) = f(c) \} ).
\end{split}
\]
This has the same form as the cost $s$ of Problem~\ref{prob:rcc},
with the difference that in $c(f, G_{\triplets^\prime})$
the cost of every triplet $(a,b,c) \in \triplets^\prime$
is the sum of three indicator variables,
instead of a single indicator with a disjunction of three conditions,
as is the case in $s(f, \triplets^\prime)$.
The conditions, however, are the same in both cases.
First, this implies that $s(f, \triplets^\prime) \leq c(f, G_{\triplets^\prime})$
for every $f$.
Second, it is easy to see that
for a given triplet $(a,b,c)$,
at most {\em two} of these conditions can be satisfied simultaneously.
Whenever the triplet $(a,b,c)$ incurs a cost of $1$ in $s(f,\triplets^\prime)$,
it therefore incurs at most a cost of $2$ in $c(f, G_{\triplets^\prime})$
for every $f$.
And if none of the conditions is satisfied,
the triplet $(a,b,c)$ incurs a cost of zero in both cases.
This implies the 2nd inequality of the Lemma.
\end{proof}

\begin{lemma}
\label{lemma:decomp}
We have
$s(f, \triplets) \leq |\vc(C_\triplets)| + s(f, \triplets^\prime)$
for any vertex cover $\vc(C_\triplets)$ and clustering function $f$.
\end{lemma}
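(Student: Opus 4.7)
The plan is to exploit the fact that $s(f,\cdot)$ is simply a count of unsatisfied triplets, so it is both additive over disjoint subsets of its second argument and trivially bounded by the cardinality of that argument. Since $\vc(C_\triplets)$ is (identified with) a subset of $\triplets$, and $\triplets^\prime$ is defined as its complement in $\triplets$, we have the disjoint decomposition $\triplets = (\triplets \cap \vc(C_\triplets)) \sqcup \triplets^\prime$.

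First, I would write
\[
s(f, \triplets) = \sum_{(a,b,c) \in \triplets} \I\{\text{$(a,b,c)$ is unsatisfied by $f$}\},
\]
and split the sum according to the above decomposition to obtain
\[
s(f, \triplets) = s(f, \triplets \cap \vc(C_\triplets)) + s(f, \triplets^\prime).
\]
Next, I would observe that for the first term each triplet contributes either $0$ or $1$ to the sum, and therefore
\[
s(f, \triplets \cap \vc(C_\triplets)) \leq |\triplets \cap \vc(C_\triplets)| \leq |\vc(C_\triplets)|.
\]
Combining the two displays yields the claimed inequality.

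There is no real obstacle here: the bound is essentially a worst-case accounting in which we charge every triplet removed by the vertex cover the maximum possible cost of $1$, while the remaining triplets contribute exactly $s(f, \triplets^\prime)$. The only mild subtlety is notational, namely interpreting $\vc(C_\triplets)$ simultaneously as a set of vertices of $C_\triplets$ and as a subset of $\triplets$, which is legitimate because by Definition~\ref{def:cgraph} the vertices of $C_\triplets$ are precisely the triplets in $\triplets$.
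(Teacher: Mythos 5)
Your proof is correct and follows essentially the same route as the paper's: decompose $\triplets$ into the disjoint union of $\vc(C_\triplets)$ and $\triplets^\prime$, use additivity of $s$ over this partition, and bound the contribution of the covered triplets by their cardinality. The only difference is the (harmless) extra care of writing $\triplets \cap \vc(C_\triplets)$, which equals $\vc(C_\triplets)$ since the vertices of $C_\triplets$ are exactly the triplets.
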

\begin{proof}
By definition, $\vc(C_\triplets)$ and $\triplets^\prime$ constitute
a disjoint partition of $\triplets$.
Since the cost function $s$ is simply a sum over triplets in $\triplets$,
we have $s(f, \triplets) = s(f, \vc(C_\triplets)) + s(f, \triplets^\prime)$
for any $f$.
Also, clearly $s(f, \vc(C_\triplets))$ can be at most
$|\vc(C_\triplets)|$ (all triplets in $\vc(C_\triplets)$ are violated),
which leads to the inequality of the Lemma.
\end{proof}

\begin{lemma}
\label{lemma:vcbound}
Let $\triplets$ be a set of triplets,
$C_\triplets$ denote the associated constraint graph,
$\vc^{\min}(C_\triplets)$ the minimum vertex cover of $C_\triplets$,
and let $f^\opt$ denote the optimal solution to Problem~\ref{prob:rcc}.
We have
$|\vc^{\min}(C_\triplets)| \leq s(f^\opt, \triplets)$.
\end{lemma}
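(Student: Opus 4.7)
My plan is to exhibit an explicit vertex cover of $C_\triplets$ whose size is exactly $s(f^\opt, \triplets)$, which immediately upper-bounds the minimum vertex cover.

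The natural candidate is the set $S \subseteq \triplets$ of triplets that are \emph{unsatisfied} by $f^\opt$. By the definition of the cost function in Problem~\ref{prob:rcc}, we have $|S| = s(f^\opt, \triplets)$, so it suffices to verify that $S$ is a vertex cover of the constraint graph $C_\triplets$.

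To check the vertex cover property, I would pick an arbitrary edge $(t_1, t_2)$ of $C_\triplets$ and invoke Definition~\ref{def:cgraph}: by construction, $t_1$ and $t_2$ are inconsistent, meaning there exist items $u, v$ such that satisfying $t_1$ forces $f(u) = f(v)$ while satisfying $t_2$ forces $f(u) \neq f(v)$ (or vice versa). Hence no single clustering function can satisfy both $t_1$ and $t_2$ simultaneously; in particular $f^\opt$ fails on at least one of them, so at least one of $t_1, t_2$ lies in $S$. This covers the edge, and since the edge was arbitrary, $S$ is indeed a vertex cover.

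Combining the two observations, $|\vc^{\min}(C_\triplets)| \leq |S| = s(f^\opt, \triplets)$, which is the claim. There is no real obstacle here: the argument is essentially a definition-unwinding once one notices that ``inconsistent'' is precisely the property that ensures at least one endpoint of every edge must be violated by any clustering, including the optimum. The only thing worth being careful about is that $S$ is constructed from $f^\opt$ and not from any approximate or heuristic solution, so that the resulting bound holds against the optimum of Problem~\ref{prob:rcc}.
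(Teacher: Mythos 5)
Your proof is correct and follows essentially the same route as the paper: the set of triplets left unsatisfied by $f^\opt$ covers every edge of $C_\triplets$ (since inconsistency forces any clustering to violate at least one endpoint), hence is a vertex cover of size $s(f^\opt,\triplets)$, which upper-bounds $|\vc^{\min}(C_\triplets)|$. Your write-up is in fact slightly more explicit than the paper's, which states the same idea more informally.
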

\begin{proof}
Every edge in $C_\triplets$ consists of a pair of triplets,
of which at least one {\em must be unsatisfied}
for {\em any} clustering $f$, including $f^\opt$.
The minimum vertex cover $\vc^{\min}(C_\triplets)$ corresponds to
the smallest possible subset of triplets
that must be unsatisfied,
resulting in a lower bound of $s(f^\opt, \triplets)$.
\end{proof}

We conclude with the proof of Theorem~\ref{thm:approx}:
\begin{proof}
The proof is a mechanical exercise with the lemmas presented above.
We start from Lemma~\ref{lemma:decomp} that holds for any $f$ and $\vc(C_\triplets)$:
\begin{eqnarray*}
&& s( f^{\alg}, \triplets ) \leq |\vc^{\alg}(C_\triplets)| + s( f^{\alg}, \triplets^\prime ) \hspace{2em} \\
&\leq& |\vc^{\alg}(C_\triplets)| + c( f^{\alg}, G_{\triplets^\prime} ) \hspace{2em}  \hbox{(Lemma~\ref{lemma:sw})} \\
&\leq& \beta |\vc^{\min}(C_\triplets)| + \alpha c( f^\opt_{CC}, G_{\triplets^\prime} ) \hspace{1em}  \hbox{(Assum.~\ref{asm:ccapp} and \ref{asm:vcapp})} \\
&\leq& \beta |\vc^{\min}(C_\triplets)| + \alpha c( f^\opt, G_{\triplets^\prime} ) \hspace{1em}  \hbox{($f^\opt_{CC}$ is optimal)}\\
&\leq& \beta |\vc^{\min}(C_\triplets)| + 2 \alpha s( f^\opt, \triplets^\prime ) \hspace{1em}  \hbox{(Lemma~\ref{lemma:sw})}\\
&\leq& \beta |\vc^{\min}(C_\triplets)| + 2 \alpha s( f^\opt, \triplets ) \hspace{2em}  \hbox{($\triplets^\prime \subseteq \triplets$)}\\
&\leq& \beta s( f^\opt, \triplets ) + 2 \alpha s( f^\opt, \triplets ) \hspace{2em}  \hbox{(Lemma~\ref{lemma:vcbound})} \\
&=& (2\alpha + \beta) \; s( f^\opt, \triplets ).
\end{eqnarray*}
\end{proof}

\end{document}